\definecolor{pastelred}{rgb}{1.0, 0.41, 0.38}
  \providecommand\BibTeX{{%
    \normalfont B\kern-0.5em{\scshape i\kern-0.25em b}\kern-0.8em\TeX}}}
\begin{document}

%%
%% The "title" command has an optional parameter,
%% allowing the author to define a "short title" to be used in page headers.
\title{A Semantic Framework for PEGs}

%%
%% The "author" command and its associated commands are used to define
%% the authors and their affiliations.
%% Of note is the shared affiliation of the first two authors, and the
%% "authornote" and "authornotemark" commands
%% used to denote shared contribution to the research.
\author{Sérgio Queiroz de Medeiros}
\email{sergiomedeiros@ect.ufrn.br}
\orcid{0000-0002-0759-0926}
\author{Carlos Olarte}
\authornote{Carlos Olarte was funded by CNPq.}
\email{carlos.olarte@gmail.com}
\orcid{0000-0002-7264-7773}
\affiliation{%
  \institution{ECT, Federal University of Rio Grande do Norte}
  \city{Natal}
	\country{Brazil}
}

%%
%% By default, the full list of authors will be used in the page
%% headers. Often, this list is too long, and will overlap
%% other information printed in the page headers. This command allows
%% the author to define a more concise list
%% of authors' names for this purpose.
\renewcommand{\shortauthors}{Sérgio Queiroz de Medeiros and Carlos Olarte}

%%
%% The abstract is a short summary of the work to be presented in the
%% article.
\begin{abstract}
% !TEX root = main.tex
% !TEX spellcheck = en-US
Parsing Expression Grammars (PEGs) are a recognition-based formalism  which allows to describe the syntactical and the lexical elements of a language. The main difference between Context-Free Grammars (CFGs) and PEGs relies on the interpretation of the choice operator: while the CFGs' unordered choice $e \mid e'$ is interpreted as the union of the languages recognized by $e$ and $e'$, the PEGs' prioritized choice $e\slash e'$ discards $e'$ if $e$ succeeds.  Such subtle,  but important difference,   changes  the language 
recognized and yields more efficient parsing algorithms.  This paper proposes a rewriting logic  semantics for PEGs. We start
with a rewrite theory giving meaning to the usual constructs in PEGs. Later, we show that cuts, a mechanism for controlling  backtracks in PEGs, finds  also a natural representation in our framework. 
 We   generalize such mechanism, allowing for both  local and global cuts with a precise, unified and formal semantics. Hence, our work strives at better understanding and controlling backtracks in parsers for PEGs. The semantics we propose is executable and, besides being a parser with modest efficiency, it can be used as a playground to test different optimization ideas.  More importantly, it is a mathematical tool that can be used for different analyses.

\end{abstract}

%%
%% The code below is generated by the tool at http://dl.acm.org/ccs.cfm.
%% Please copy and paste the code instead of the example below.
%%
\begin{CCSXML}
<ccs2012>
   <concept>
       <concept_id>10003752.10003766.10003771</concept_id>
       <concept_desc>Theory of computation~Grammars and context-free languages</concept_desc>
       <concept_significance>500</concept_significance>
       </concept>
   <concept>
       <concept_id>10003752.10003766.10003767.10003769</concept_id>
       <concept_desc>Theory of computation~Rewrite systems</concept_desc>
       <concept_significance>500</concept_significance>
       </concept>
	<concept>
		<concept_id>10011007.10011006.10011039.10011040</concept_id>
		<concept_desc>Software and its engineering~Syntax</concept_desc>
		<concept_significance>500</concept_significance>
	</concept>
	<concept>
		<concept_id>10011007.10011006.10011041.10011688</concept_id>
		<concept_desc>Software and its engineering~Parsers</concept_desc>
		<concept_significance>500</concept_significance>
	</concept>
</ccs2012>
\end{CCSXML}

\ccsdesc[500]{Theory of computation~Grammars and context-free languages}
\ccsdesc[500]{Theory of computation~Rewrite systems}
\ccsdesc[500]{Software and its engineering~Syntax}
\ccsdesc[500]{Software and its engineering~Parsers}

%%
%% Keywords. The author(s) should pick words that accurately describe
%% the work being presented. Separate the keywords with commas.
\keywords{parsing expression grammars, rewriting logic}

%%
%% This command processes the author and affiliation and title
%% information and builds the first part of the formatted document.
\maketitle

\section{Introduction}\label{sec:intro}
% !TEX root = main.tex
% !TEX spellcheck = en-US
Parsing Expression Grammars (PEGs) \cite{DBLP:conf/popl/Ford04} are the core of several widely used parsing tools.
Visually, the description of a PEG is similar to the description of a Context-Free Grammar (CFG).
Unlike CFGs, PEGs have a deterministic ordered choice operator, which allows a limited form
of backtracking. This makes PEGs a suitable formalism for representing deterministic Context-Free Languages,
i.e., the LR(k) languages. Another key difference between both formalisms is 
the presence of syntactic predicates, that allow PEGs to  describe the lexical elements
of a language. PEGs were conceived as a formalism to recognize strings, while
CFGs are commonly used to generate strings.  Hence, it may not be trivial to determine the language described by a PEG.

Although PEGs ordered choice operator avoids ambiguities when writing a grammar,
it also poses some difficulties. To correctly recognize a language, the user of a PEG-based
tool needs to  be careful about
  the ordering of the alternatives in a choice $e_1 \;/\; e_2$,
as $e_2$ will never match a string $x$ when $e_1$ matches a prefix of $x$.
Regarding performance, PEGs local backtracking may still impose a performance
drawback. Thus, when possible, it is desirable to avoid the backtracking
associated with a choice.

In this paper we study the problem of backtracking in PEGs through the lens of a formal approach based on the rewriting logic (RL)  \cite{meseguer-rltcs-1992}
framework. RL can be seen as a flexible and general model of computation where important properties of the modeled system can be specified and proved. 
We thus provide both:  a formal foundation for better understanding  and controlling backtracks in PEGs; 
and tools that can help the user  to check whether her grammar complies with the intended meaning or not. As an interesting side effect, we show that our  specification is not only a (correct by construction)
recognition-based algorithm, but also a derivative parser, able to generate strings from a grammar. \\

\noindent\textbf{Plan and contributions. }
After recalling   PEGs in \S \ref{sec:pegs}, we start
in \S \ref{sec:rw}  with a rewrite theory modeling the natural semantics rules  for PEGs originally proposed in \cite{DBLP:journals/scp/MascarenhasMI14}. Hence, we  obtain  a formal model of the  derivability relation in PEGs. 
Due to  the  $\epsilon$ representational distance typical of  RL specifications, the model and the actual system are very close, thus making it easier to reason  
about grammars in our framework. 
 
Inspired by a small-step semantics approach, 
\S \ref{sec:moreefficient} proposes an alternative (and equivalent) rewrite theory that eliminates 
  the (unnecessary)   non-deterministic steps of the first one. Both theories are 
executable in Maude \cite{DBLP:conf/maude/2007},
an efficient  rewriting engine. We  compare the two theories and   show that the second one is more amenable for automatic verification techniques. 
Hence, besides being a formal tool for reasoning  about PEGs, the proposed specification is actually a \emph{correct by construction} parser  with modest performance. 

Cut operators \cite{DBLP:conf/paste/MizushimaMY10} have been introduced in PEGs to reduce the number of backtracks and improve efficiency. However, the semantics of these operators has  remained  informal in the literature.  Using RL, 
\S \ref{sec:sym}  gives a precise meaning to  cuts.   The proposed semantics  makes evident why less memory is needed when cut annotations are added to a grammar. More importantly, such formal account allows us to generalize the concept of cuts 
from \emph{local} 
to \emph{global} cuts in \S \ref{sec:global}. 
In some cases, global cuts may save  more computations than the cuts proposed in 
\cite{DBLP:conf/paste/MizushimaMY10}. 
In \S \ref{sec:bound} and \S \ref{sec:simpl}
 we  show that local and global cuts can coexist coherently with a clear and uniform semantics. \S \ref{sec:bench} reports some benchmarks on grammars annotated with   cuts. 
 
 The machinery developed here can be leveraged to perform other analyses in PEGs. We explore one of such analyses in
  \S \ref{sec:symbolic}, where
  we report on a preliminary attempt to 
  use  our specification  to not only recognize a given string but, symbolically, produce all the possible strings (up to a bounded length)  from a grammar.
This kind of analyses can be useful to highlight unexpected behaviors when a PEG is designed, 
and also  to explore optimization ideas.

\S \ref{sec:related}  discusses  related work and 
\S \ref{sec:conc} concludes the paper.
The companion appendix contains detailed proofs  of the main results. 
All the rewrite theories proposed here and the benchmarks reported in \S \ref{sec:bench}
can be found (and reproduced) 
with the Maude and script files 
available at the  public repository  \url{https://github.com/carlosolarte/RESPEG}.

\section{Parsing Expression Grammars}\label{sec:pegs}
% !TEX root = main.tex
% !TEX spellcheck = en-US

From now on, we shall write PEG to refer to the following definition of  parsing expression grammars. 

\begin{definition}[Syntax]
A   PEG $G$ is a tuple $(V,T,P, e_{\iota})$, where
$V$ is a finite set of non-terminals, $T$ is a finite set of terminals,
$P$ is a total function from non-terminals to parsing expressions 
and $e_{\iota}$ is the initial parsing expression. 
We shall use A,B,C to range over elements in $V$ and $a,b,c$ to range over elements in $T$. 
Parsing expressions, ranged over by $e, e_1, e_2, $ etc., are built from the syntax:
\[
\qquad e  :=  \epsilon ~\mid~ A ~\mid~ a ~\mid~ e_1 \ e_2 ~\mid~ e_1 \slash e_2 ~\mid~ e* ~\mid~ ! e %~\mid~ \throw^\Lerror 
\]
\end{definition}
We   assume that $V$ is partitioned into two (disjoint) sets $\Vlex$ and  $\Vsyn$, where $\Vlex$ is the set of non-terminals that match lexical elements,
also known as tokens, and $\Vsyn$ represents the non-terminals
that match syntactical elements.

The empty parsing expression is $\epsilon$. The expression  $e_1 \ e_2$ stands for the (sequential) composition of $e_1$ and $e_2$. 
The expression $e_1 \slash e_2$ denotes an ordered choice. 
The repetition of  $e$ is written as $e*$. 
The look-ahead or negative predicate is written as  $!e$. A predicate $!e$ tests if the expression $e$ matches the input, without consuming it. Predicates are handy to describe lexical elements and to act as   guards in  choice alternatives.   
%Finally, the expression $\throwE$ throws the error $\Lerror$. 

The function $P$ 
maps non-terminals into parsing expressions.
$P(A)$ denotes the expression associated to $A$. Alternatively,  $P$ can be seen as a set of rules of the form 
$A \leftarrow e$. 

Strings are built from terminal symbols and 
$\epsilon$ denotes the empty string. 
We shall use $x,y,w$ to range over (possibly empty) strings and $xy$ denotes the concatenation of $x$ and $y$. \\

\noindent\textbf{Semantics.}
A parsing expression $e$, when applied to an
input string $x$, either succeeds or fails.
When the matching of $e$ succeeds, it consumes a prefix of the input. Such prefix can be the empty string $\epsilon$ (and nothing is consumed). 

We define the states of a PEG parser as follows. 
%When it fails, it can produce two different labels, namely, $\Lfail$ or $\Lerror$. The difference between these labels will be clarified soon. 
\begin{definition}[States]\label{def:states}
Parsing states are built from 

\[
S ::= \PEGstateD ~\mid~ \Lfail ~\mid~ x
\]
\end{definition}

 In $\PEGstateD$, the  expression $e$ in the context of the PEG $G$ is matched against the  string $x$. The 
expression $\Lfail$
%, with  $f\in \{\Lfail, \Lerror\}$,  
represents a failed  attempt of matching\footnote{Failing states can also take the form $(\Lfail, y)$ where $y$ is the suffix of the input that could not be recognized. For the sake of presentation, we shall ignore here the suffix $y$.  }. 
 The state $x$   represents the successful matching of an expression returning the suffix $x$. 

% !TEX root = main.tex
% !TEX spellcheck = en-US

\begin{figure}
\resizebox{.45\textwidth}{!}{$
\begin{array}{c}
\infer[\mylabel{empty}]{G[\varepsilon]  \; x \Lp x}{}
\qquad
\infer[\mylabel{var}]{G[A]  \; x \Lp S}{G[P(A)]  \; x \Lp S}\\\\
\infer[\mylabel{term.1}]{G[a]  \; ax \Lp x}{}
\qquad
\infer[\mylabel{term.2}]{G[b]  \; ax \Lp \Lfail}{b \ne a}
\\\\
\infer[\mylabel{term.3}]{G[a]  \; \varepsilon \Lp \Lfail}{} \\\\
\infer[\mylabel{seq.1}]{G[e_1 \; e_2]  \; x \Lp S}{G[e_1]  \; x \Lp y   \fivespaces  G[e_2]  \; y \Lp S}
\qquad			
\infer[\mylabel{seq.2}]{G[e_1 \; e_2]  \; x \Lp \Lfail}{G[e_1]  \; x \Lp \Lfail}\\\\
\infer[\mylabel{ord.1}]{\Matgk{e_1 \;\slash\; e_2}{x}{} \Lp y}{\Matgk{e_1}{x}{} \Lp y}
%\qquad        
%\infer[\mylabel{ord.2}]{\Matgk{e_1 \;\slash\; e_2}{xy}{} \Lp \Tup{\Lerror}{y}}{\Matgk{e_1}{xy}{} \Lp \Tup{\Lerror}{y} }\\\\
\qquad
\infer[\mylabel{ord.2}]{\Matgk{e_1 \;\slash\; e_2}{x}{} \Lp S}{\Matgk{e_1}{x}{} \Lp \Lfail \fivespaces \Matgk{e_2}{x}{} \Lp S}
\\\\
\infer[\mylabel{not.1}]{G[!e]  \; x \Lp x}{G[e] \; x \Lp \Lfail}
\qquad			
\infer[\mylabel{not.2}]{G[!e]  \; x \Lp \Lfail}{G[e] \;  x \Lp y}
\\\\
\infer[\mylabel{rep.1}]{G[e*]  \; x \Lp x}{G[e]  \; x \Lp \Lfail}
%\qquad
%\infer[\mylabel{rep.2}]{G[e*]  \; xy \Lp \Tup{\Lerror}{y}}{G[e]  \; xy \Lp \Tup{\Lerror}{y} }
%\\\\
\qquad
\infer[\mylabel{rep.2}]{G[e*]  \; x \Lp z}{
 \deduce{G[e]  \; x \Lp {y}}{}
 &
 \deduce{G[e*]  \; y \Lp z}{}
}
%\\\\
%\infer[\mylabel{throw}]{\Matgk{\throwE}{x}{} \Lp \Tup{\Lerror}{x}}{} 
\end{array}
$}

	\caption{Semantics of PEGs. $S$ is a parsing state denoting either $\Lfail$ or a string $x$ (Definition  \ref{def:states}).}
	\label{fig:sempeglab}
\end{figure}

\begin{definition}[Semantics]
 The reduction relation $\Lp$ is the least binary relation 
 on parsing states 
satisfying the rules in Fig. \ref{fig:sempeglab}. 
The language recognized by $e$ in the context of $G$ is the set 
 $
\Lang{e, G} = \{ x ~\mid~ \PEGstate{G}{e}{x} \Lp y \}
$. 
%The set $\LangC{e,G} \subseteq \Lang{e,G}$ of \emph{complete} sequences generated by $e$ is defined as  $\LangC{e, G} = \{ x ~\mid~ \PEGstate{G}{e}{x} \Lp \epsilon \}$.
 Two parsing expressions are equivalent (in the context of $G$), notation $e \equiv e'$, if $\Lang{e,G} = \Lang{e',G}$. 
%We define similarly $e\equiv_C e'$. 
\end{definition}

%Usually, in parsers for programming languages, we are interested in $\LangC{e,G}$ rather than in $\Lang{e,G}$
%since the whole input must be consumed. 

Let us dwell upon the rules in Fig. \ref{fig:sempeglab}. 
If 
$\PEGstateD\Lp y$, the  expression $e$     consumes a (possibly empty) prefix of $x$
and returns the remaining suffix $y$. 
For instance, rule {\bf term.1} consumes the non-terminal $a$ in $ax$ and returns $x$. 
If 
$\PEGstateD \Lp \Lfail$,
$e$ fails to match the input $x$ (see  {\bf term.2} and {\bf term.3}).

In $\PEGstate{G}{e_1 \ e_2}{x} $, either $e_1$ fails to match $x$ and  the whole expression fails ({\bf seq.2})
or $e_1$ succeeds on $x$ and the final result depends on the outcome of  matching   $e_2$ against the remaining suffix $y$ ({\bf seq.1}).

When the ordered choice $e_1 \slash e_2$
is applied to $x$, 
if $e_1$ matches $x$,  then the alternative $e_2$ is discarded ({\bf ord.1} ). (Note the difference w.r.t.   CFGs). On the other side, if $e_1$ fails, 
a backtrack  is performed and $e_2$ is applied to the string $x$, regardless of whether  $e_1$ consumed part of it or not ({\bf ord.2}).

The look-ahead operator $! e$ (or negative predicate) 
fails when $e$ succeeds ({\bf not.2}) and succeeds (not consuming any input) when $e$ fails ({\bf not.1}). Note that the expression $!e$ does not consume any prefix of the input. 
%captures any label and turns it
%into a success (rule {\bf not.1}), while turning a success into
%a {\tt fail}   (rule {\bf not.2}). 

The repetition $e*$ greedily matches $e$ against the input. If $e$ 
fails on $x$, $e*$ does not consume any input ({\bf rep.1}). Rule {\bf rep.2}
specifies the recursive case where $e*$ continues matching  the suffix $y$. Note that this operator is different from the usual one in regular expressions: $\PEGstate{G}{a*}{ xy} \Lp y$ iff $x$ is a (possibly empty)
string containing only the terminal symbol $a$ and $y$ starts with $b\neq a$. This operator can be in fact derived from the others: let $e$ be a parsing expression and $A_e$ be a distinguished non-terminal symbol. Then, $e*$ is equivalent to $A_e$ where $A_e\leftarrow e \slash \epsilon $ \cite{DBLP:conf/popl/Ford04}. 
We shall keep this operator in the syntax since it greatly simplifies some examples. 

Clearly,  if  $\PEGstate{G}{e}{x} \Lp y$ then  $y$ is a suffix of $x$. In order to guarantee termination on $\Lp$, a well-formedness condition on grammars is assumed \cite{DBLP:conf/popl/Ford04}:  there are no left-recursive rules  such as $A \leftarrow  A~ e_1 \slash e_2$. Moreover,  there are no expressions of the form  $e*$ where $e$ succeeds on some input $x$ not consuming any prefix of it (\cite{DBLP:journals/fuin/Redziejowski09}). 

Unlike CFGs,  PEGs are  \emph{deterministic} \cite{DBLP:conf/popl/Ford04}. This means that   $\Lp$ is a function.

Without loss of generality, our analyses consider only PEGs that satisfy the \emph{unique token prefix} condition~\cite{DBLP:journals/scp/MedeirosJM20}. Roughly,  tokens of the grammar are described by non-terminals $A \in \Vlex$. Hence, at most one  non-terminal $A \in \Vlex$ matches a prefix of the current input. This is the typical behavior of parsing tools that have a separate lexer (e.g., yacc gets  tokens from lex for a given input).
\begin{definition}[Unique token prefix]\label{def:unique}
Let $G = (V,T,P,e_{\iota})$, $V = \Vlex \cup \Vsyn$, 
$A,B \in \Vlex$ s.t. $A\neq B$,  $a \in T$ and $xy$ be a string.  $G$ has the unique token prefix property iff
$\PEGstate{G}{A}{axy} \Lp y$ implies that   $\PEGstate{G}{B}{axy} \Lp \Lfail$.

\end{definition}

\section{A Rewriting Logic Semantics for PEGs}\label{sec:rw}
% !TEX root = main.tex
% !TEX spellcheck = en-US

This section proposes an executable rewriting logic semantics for PEGs that we later use to control backtracks (\S \ref{sec:sym}) and 
as a basis for a  derivative parser (\S \ref{sec:symbolic}). 

Rewriting Logic (RL) \cite{meseguer-rltcs-1992} is a general model of computation where proof systems, semantics of programming languages and, in general, transition systems can be specified and verified. 
 RL can be seen as a logic of change that can naturally deal with states and  concurrent computations. 
     The reader can find a  detailed survey of RL in  \cite{meseguer-twenty-2012} and   \cite{DBLP:journals/jlap/DuranEEMMRT20}.

 In the following, we briefly  introduce  the main concepts of RL needed to understand   this paper, and, at the same time, we  introduce step by step the proposed semantics for PEGs. 
  For the sake of readability,  we shall adopt in most of the cases the notation of Maude \cite{DBLP:conf/maude/2007}, a high-level language   that
supports membership equational logic and rewriting logic specifications. Thanks to its efficient rewriting
engine and its metalanguage capabilities, Maude turns out to be an excellent tool
for creating executable environments of various logics and  models of computation. This will make our specification \emph{executable}.

A \emph{rewrite theory} is a tuple $\cR = (\Sigma, E \uplus B, R)$. The static behavior of the system  is modeled by 
the order-sorted equational theory
 $(\Sigma, E \uplus B)$ and the dynamic behavior  by the set of rewrite rules $R$.  These components are explained below. \\
 
\noindent\textbf{Equational theory.} The signature $\Sigma$ defines a set of typed operators  used to build the terms of the language (i.e, the syntax of the modeled system).   $E$ is a set of  equations
over $T_\Sigma$ (the set of terms built from $\Sigma$)
of the form $t = t' ~ \mathbf{if} \phi$.
The equations specify the algebraic identities that terms of the language must satisfy (e.g., $|\epsilon| = 0$ and $|ax|  = 1+ |x|$ where $|x|$ denotes the length of $x$).  $B$ is a set of structural axioms
over $T_\Sigma$ for which there is a
finitary matching algorithm. Such axioms include   associativity, commutativity, and
identity, or combinations of them. 
For instance,  $\epsilon$ is the identity for concatenation and then, modulo this axiom, the term $x\epsilon$ is equivalent to $x$. 
The equational theory associated to $\cR$ thus defines deterministic and finite computations as in a functional programming language.

Our semantics for PEGs  starts by defining an appropriate equational theory
 or functional module --\lstinline{fmod} -- in Maude's terminology. Here we define the set of terminal and non-terminal symbols and strings: 

\begin{lstlisting}

fmod PEG-SYNTAX is
 sort NTSymbol . --- Non-terminal symbols
 sort TSymbol .  --- Terminal symbols 
 sort Str .      --- Strings 
 sorts TChar TExp . --- Chars and character classes
  ... --- to be completed/explained below
endfm  
\end{lstlisting}

First,  sorts (or types) are declared. As we shall see,   \code{TChar} is the basic block for building terms of type  \code{Str}  (by juxtaposition/concatenation)  and \code{TExp} will be populated with the usual patterns such as  \code{[0-9]}, \code{[a-z]}, etc.

As we said before, the equational theory is ordered-sorted. This means that besides having many sorts, 
there is a partial order on  sorts defining a sub-typing relation:
\begin{lstlisting}

 subsorts String < TChar  < Str .   --- Sub-typing
 subsort TChar TExp < TSymbol .
 subsort Qid < NTSymbol .
\end{lstlisting}

The sort \code{String} is part of the standard library of Maude and it represents the usual strings in programming languages. Hence, ``('' is a Maude's \code{String} and also a \code{TChar} in this specification (being \code{String} the \emph{least sort}). 
Note  that terms of type  \code{TChar} and \code{TExp} are also terminal symbols due to the subsort relation. 
Hence, examples of valid terminal symbols  are \lstinline{";", "c", "3", [0-9]}, etc. 
A Qid is a \emph{qualified identifier} of the form \lstinline{'X}  (using an apostrophe at the beginning of the expression). Examples of non-terminal symbols are  \lstinline{'Begin},  \lstinline{'Statement}, etc. In order to make the presentation cleaner, in the forthcoming examples, we shall 
omit the apostrophe and write simply   \code{Begin} instead of \code{'Begin}.

%For instance,  every \lstinline{TSymbol} is also a \lstinline{Str}. 
%The sort \lstinline{String} is defined in Maude and it contains the standard strings in any programming language using double quotes. 

Besides the sorts, the signature also specifies the functional symbols or operators  that  define the syntax of the model. Here some examples of constructors for the type \code{TExp}:\\
\begin{lstlisting}
 op [.] : -> TExp . --- Any character
 ops [0-9] [a-z] [A-Z] ... : -> TExp .
\end{lstlisting}

All these operators are constants (functions without parameters) of type \code{TExp}. 
For the sort  \code{Str}, we have: 
\begin{lstlisting}

 op eps : -> Str . --- empty string
  --- Strings (concatenated with whitespace) 
 op __ : TChar Str -> Str [right id: eps ] . 
\end{lstlisting}

 \lstinline{eps} is the defined constant to denote the empty string $\epsilon$. 
In Maude, ``\lstinline{_}'' denotes the position of an argument. The operator \lstinline{op __} (usually called empty syntax in Maude)  
receives two parameters and returns a  \code{Str}. Note 
that \code{eps} is the right identity (an \emph{axiom} associated to \code{op __}) for this operator (i.e., $x\epsilon \equiv x$). 
As an example, the term \code{"(" "3" "+" "2" ")"} is an inhabitant of 
\code{Str}. For the sake of readability, when no confusion arises, we shall omit  quotes on characters and simply write   \code{( 3 + 2 )}. 
%Due to the subsort relation \code{String < TChar  < Str}, the term  
%Hence, module these axioms, the expression \lstinline{("a" eps) "b"}
%is equivalent to, the simpler term, \lstinline{"a" "b"}. 

Deterministic and terminating computations are specified via equations. In fact,  an equational theory is  executable 
only if it is terminating, confluent  and sort-decreasing \cite{DBLP:conf/maude/2007}. Under these conditions, 
the mathematical meaning of the equality $t\equiv t'$ coincides 
with the following strategy:  reduce $t$ and $t'$ to their 
unique (due to termination and confluence) normal forms $t_c$ and $t_c'$ 
using the equations of the theory as \emph{simplification rules} from left to right. Then, $t \equiv t'$ 
iff $t_c =_B t_c'$ (note that $=_B$, equality modulo $B$,  is decidable 
since it is assumed a finitary matching algorithm
for   $B$). 

As an example, 
the following specification checks whether a terminal symbol matches a \code{TChar}:

\begin{lstlisting}

  op match : TSymbol TChar -> Bool .
  vars tc tc' : TChar .  --- logical variables
  eq match(tc, tc')     =   tc == tc' . 
  eq match([.], tc)     =   true .
  eq match([0-9], tc)   =   tc >= "0" and-then tc <= "9" .
  ...
\end{lstlisting}

The \code{==} symbol is built-in Maude   (in this case, checking whether two   \code{String}s are equal). 
Variables in equations are implicitly  universally quantified and  the second equation  must be read as 
\[
(\forall \texttt{tc}:\texttt{TChar}). (\texttt{match}([.], \texttt{tc}) = \texttt{true})
\]
 An equation then rewrites its left hand side into the right hand side \emph{simplifying}
 terms. For instance,  
 \[  \mbox{the term 
 \code{match([.], "a" )}  reduces into   \code{true} }
\]

The syntax of parsing expressions is defined as follows:
\begin{lstlisting}

sort Exp . --- Parsing expressions 
op emp : -> Exp . --- the empty expression
--- Terminal and non-terminal symbols are expressions
subsort TSymbol NTSymbol < Exp .
op _._ : Exp Exp -> Exp . --- Sequence
op _/_ : Exp Exp -> Exp . --- Ordered choice
op _* : Exp -> Exp .      --- Repetition
op !_ : Exp -> Exp .      --- Negative predicate 
\end{lstlisting}

Note that sequential composition is represented as \code{e.e'}. For the sake of readability, we shall omit the ``.''  and simply write  \code{e e'} when no confusion arises. 

It is also possible to specify derived constructors by defining the appropriate equations.  For instance, the \emph{and predicate} $\& e$ 
  can be defined as  $!! e$ \cite{DBLP:conf/popl/Ford04}: 
\begin{lstlisting}

op &_ : Exp -> Exp .    --- derived operator  
eq & e = ! ! e .        --- equation given meaning to & e
\end{lstlisting}

Hence, $\& e$ attempts to match $e$ and, if it succeeds,  then it backtracks to the starting point (not consuming any part of the input). 
More examples of derived constructors are 
\begin{lstlisting}

op _? : Exp -> Exp .  --- zero or one
op _+ : Exp -> Exp .  --- one or more
eq e ? = e / emp .
eq e + = e . e * .
\end{lstlisting}

Production rules ($A \leftarrow p$), grammars (as sets of production rules) and parsing states are defined as follows: \\
\begin{lstlisting}
sorts  Rule  Grammar State .
subsort Rule < Grammar . --- Every rule is a grammar
op _<-_ : NTSymbol Exp -> Rule .
op nil : -> Grammar . --- empty set of rules
--- Concatenating rules
op _,_ : Grammar Grammar->Grammar [comm assoc id:nil] .
--- Parsing states
op _[_]_ : Grammar Exp Str -> State . --- G[e]x
op fail : -> State .  --- fail 
subsort Str < State . --- x is also a state
\end{lstlisting}

Note  the axioms imposed on the operator for ``concatenating'' rules: the order of the rules  (\lstinline{comm} for commutativity) as well as parentheses (\lstinline{assoc} for associativity)
are irrelevant.  Moreover, \lstinline{nil} can be removed/added at will. 

Thanks to the almost zero representational distance \cite{meseguer-twenty-2012} 
between the   above specification
and the syntax of PEGs, we can see the system and its specification as isomorphic structures (with slightly different notations).
From now on, it should be clear from the context that   $G$ in  the expression ``$\PEGstateD$''
is a grammar while \code{G} in ``\code{G[ e ] x}'' 
is a term of sort \code{Grammar}. As noticed, we have consistently used the \texttt{monospace} font for objects in the specification. 

This concludes the specification of the equational theory defining the syntax and basic operations on PEGs. The next step is to define  rules encoding the semantics for PEG's constructors.\\

\subsection{A First Rewrite Theory}\label{sec:firstrewrite}

\noindent\textbf{Rewriting rules.}
 The last component $R$ in the rewrite theory 
 $\cR = (\Sigma, E \uplus B, R)$   is a finite set of rewriting rules of the form $t \to t'~\mathbf{if} \phi$.  A rule defines a state transformation and $R$ models  the dynamic behavior of the system (which is not necessarily deterministic, nor terminating). 
In Maude, rewrite theories are defined in modules: 
\begin{lstlisting}

mod PEG-SEMANTICS is
 pr PEG-SYNTAX . --- Importing PEG-SYNTAX
 var x : Str .    
 var G : Grammar .
 rl [empty] : G[ emp ] x => x .
 ...
endm
\end{lstlisting}

   ``\code{empty}'' is the name of the rule and variables    are implicitly  quantified. Hence, the above rule must be interpreted as:
   
\[
(\forall \texttt{G}:\texttt{Grammar}, \texttt{x}:\texttt{Str}) . (\texttt{G} [\texttt{emp}]~ \texttt{x}  \Rightarrow  \texttt{x}  )
\]

This rule reflects the behavior of the rule {\bf empty} in Fig. \ref{fig:sempeglab}. The 
rules for terminal  symbols are:
\begin{lstlisting}

rl [Terminal12] :  G[t] tc x => 
   if match(t,tc)  then x else fail fi .
rl [Terminal3] :  G[t] eps  => fail .
\end{lstlisting}
The first rule encodes both {\bf term.1} and  {\bf term.2}
where  the outcome depends on whether \code{t} matches \code{tc}. The second rule encodes the behavior of {\bf term.3}. 

The other rules are conditional rules (\lstinline{crl} instead of \lstinline{rl}) where the transition happens only if the condition after the symbol \lstinline{if} holds:

\begin{lstlisting}

crl [NTerminal] : ( G, N <- e ) [N] x => S
     if  (G, N <- e) [e] x => S .
crl [Seq1] :  G  [e . e'] x => S 
     if  G[ e ] x  =>  y     /\     
         G[ e' ] y  => S .
crl [Seq2] : G[e . e'] x => fail    if G[ e ] x  => fail .

crl [Choice1] :  G[ e / e'] x => y  if G[ e ] x  =>  y .
crl [Choice2] :  G[e / e'] x => S 
     if G[ e ] x  => fail /\ 
        G[ e' ] x  =>  S .

crl [Star1] : G[ e * ] x => x if G[ e ] x  => fail .
crl [Star2] :  G[ e * ] x  => z  
     if G[ e ] x  => y /\ 
        G[e *] y  => z .

crl [Neg1] : G  [ ! e ] x => x if G[ e ] x  => fail .
crl [Neg2] : G  [ ! e ] x => fail if G[ e ] x  => y .
\end{lstlisting}

In the case of \lstinline{[NTerminal]}, due to the axioms of the operator \lstinline{_,_} (i.e., \lstinline{[comm assoc id:nil]}), the order of the rules is irrelevant. Hence, if the current expression is \lstinline{N} (\code{N} is a variable of type \code{NTerminal}), this rule will unfold the production rule $N \leftarrow e $
 and  try to match  \lstinline{e} with the current input.

In   \lstinline{Seq1}, it must be the case that \lstinline| G[e] x| reduces to  \lstinline|y| and the configuration 
\lstinline|G[e'] y| must reduce to the state \lstinline{S}. This reflects exactly the behavior of the rule {\bf seq.1}. The other rules can be explained similarly.

A rewrite theory $\cR$ induces a rewrite relation $\rwR$ on
$T_{\Sigma}(\mathcal{X})$ (the set of terms build from $\Sigma$ and the countably set of variables $\mathcal{X}$) 
defined for every $t,u \in T_\Sigma(\mathcal{X})$ by $t
\rwR u$ if and only if there is a rule $(\ccrl{l}{r}{\phi}) \in
R$ and a substitution $\func{\theta}{\mathcal{X}}{T_\Sigma(\mathcal{X})}$ satisfying $t
=_{E \uplus B} l\theta$, $u =_{E \uplus B} r\theta$, and $\phi\theta$
is (equationally) provable from $E\uplus
B$~\cite{bruni-semantics-2006}.
In words, $t$ matches module $E\uplus B$ the left hand side of the rewrite rule under a suitable substitution and then evolves into the right hand side of the rule if the condition $\phi\theta$ holds. 
The relation
$\rwRR$ is  the reflexive and transitive closure of $\rwR $. Moreover, we shall use $t \rwRB t'$ to denote that $t \rwRR t'$ and   $t'\not\rwR$, i.e., $t$ reduces in zero or more steps to  $t'$ 
 and it  cannot be further reduced ($t'$ is called a \emph{normal  form}).

We shall use $\rwpeg$ to denote 
the induced rewrite relation from the above described theory. As a simple example, 
note that \lstinline|G["a"."b"] "a" "b"| $\rwpegB$ \lstinline{eps}. 
Note also  that the states \lstinline|x| (for any \code{Str} \code{x}) and \code{fail} are the only normal forms for $\rwpeg$.

\begin{theorem}[Adequacy]\label{th:adeq}
Let   $G$ be a grammar, $e$ a parsing expression and  $x$ a string. Then the following holds: 

\noindent\textbf{(1)} $\PEGstateD  \Lp y $ \ \ \quad \ \ \ \ iff 
\qquad \code{G[e]x} $\rwpegB$ \code{y}.
 
\noindent\textbf{(2)}
 $\PEGstateD  \Lp \Lfail  $ \quad iff
\qquad \code{G[e]x} $\rwpegB$ \code{fail}.
\end{theorem}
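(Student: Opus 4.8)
The plan is to exploit the tight, ``$\epsilon$-representational-distance'' correspondence between the inference rules of Fig.~\ref{fig:sempeglab} and the rewrite rules of module \code{PEG-SEMANTICS}: each natural-semantics rule has a homonymous rewrite rule, and the two statements will be proved by a pair of inductions, one for each direction. It is cleanest to merge (1) and (2) into a single claim. Writing $\widehat{S}$ for the term \code{y} when the parsing state $S$ is a string $y$, and for \code{fail} when $S = \Lfail$, I would prove that $\PEGstate{G}{e}{x} \Lp S$ holds iff \code{G[e]x} $\rwpegB \widehat{S}$. Handling the two possible outcomes simultaneously is essential because \textbf{var}, \textbf{seq.1}, \textbf{ord.2}, \textbf{rep.2} --- and their rewrite counterparts \code{NTerminal}, \code{Seq1}, \code{Choice2}, \code{Star2} --- carry an \emph{arbitrary} result state through a premise/condition, so success and failure cannot be separated.

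For the ``only if'' direction I would proceed by induction on the height of the derivation tree of $\PEGstate{G}{e}{x} \Lp S$, with a case split on the last rule used. Every case is a one-to-one translation: e.g.\ from \textbf{seq.1} with premises $\PEGstate{G}{e}{x} \Lp y$ and $\PEGstate{G}{e'}{y} \Lp S$, the induction hypothesis gives \code{G[e]x} $\rwpegB$ \code{y} and \code{G[e']y} $\rwpegB \widehat{S}$, which are exactly the two conditions of \code{Seq1}, hence \code{G[e.e']x} $\rwpegB \widehat{S}$; the other rules are analogous. Two routine points need attention. In the base cases \textbf{term.1}/\textbf{term.2}, rule \code{Terminal12} first produces the term \code{if match(t,tc) then x else fail fi}, which is then brought to normal form by the equations for \code{match} and for the built-in conditional, yielding \code{x} resp.\ \code{fail}; thus one natural-semantics step corresponds to a rewrite step \emph{followed by equational simplification}, exactly as in $\rwpeg$. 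And the \textbf{var} case uses that the grammar matches the left-hand side of \code{NTerminal} \emph{modulo} the \code{comm}/\code{assoc}/\code{id} axioms on the rule-concatenation operator, which is sound since $P$ is total and so the production of the fired non-terminal is present in the grammar term.

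For the ``if'' direction the key structural observation is that no rewrite rule touches a grammar, an expression or a string subterm, and state terms have no nested state subterms, so every rewrite happens at the root of a state term; consequently a reduction \code{G[e]x} $\rwpegB \widehat{S}$ starts with one top-level rewrite step, using a rule whose left-hand side matches \code{G[e]x} (there are at most two such, selected by the outermost constructor of \code{e}). I would then induct on the total number of rewrite steps in the reduction --- counting recursively those used to discharge conditions --- and do the corresponding case analysis on the first rule applied, reading off the matching inference rule of Fig.~\ref{fig:sempeglab} and discharging its premises from the induction hypothesis applied to strictly shorter sub-reductions. The step I expect to be the main obstacle is the precise operational reading of Maude's conditional rewrites: a rewrite condition whose right-hand side is a plain state variable --- as in \code{Seq1}'s \code{G[e'] y => S}, and likewise in \code{NTerminal}, \code{Choice2} and \code{Star2} --- may legitimately be discharged by a \emph{partial} (even zero-step) reduction, the remaining steps being performed only after the rule fires. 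One must argue that, for any reduction that actually reaches a normal form, the sub-reductions fed to the induction hypothesis can always be taken to reduce a state term all the way to \code{fail} or to a string: either by a direct reorganisation (appending the post-firing continuation to a partial condition-reduction, which keeps both pieces strictly smaller because a state term is never a normal form), or by first deriving confluence of $\rwpeg$ from the ``only if'' direction together with $\Lp$ being a function. Once this is settled, every case closes; under the standing well-formedness assumption both $\Lp$ and $\rwpeg$ are, moreover, total, so the two biconditionals also pin down the unique result of each.
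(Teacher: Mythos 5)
Your proposal matches the paper's proof essentially step for step: the same merged claim ($\PEGstate{G}{e}{x}\Lp S$ iff \code{G[e]x} reduces to the corresponding normal form), induction on derivation height for the forward direction, and induction on the total number of rule applications (counting those used to discharge conditions) for the converse. The only difference is that you treat the discharge of conditions with a free \code{State} variable on the right more carefully than the paper does --- the paper silently assumes the sub-reductions in conditions run to normal forms --- but this is a refinement of the same argument, not a different route.
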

\begin{proof}
(sketch). We can show that, for any state $S$, 
$\PEGstateD  \Lp S $ iff  \code{G[e]x}$\rwpegB$ \code{S}. This 
discharges both (1) and (2). Note that such $S$ must be a normal form and then, either $S=\Lfail$ or $S=x$ for some string $x$. 
The implication ($\Rightarrow$) is proved by induction on the height of the derivation of  $\PEGstateD  \Lp S $.
For the ($\Leftarrow$) side, assume that $t = $\code{ G[e]x}$\rwpegB$ \code{S}.
This means that there exists $n>0$ (since $\PEGstateD$ is not a normal form)  and a derivation of the form  
$t = t_0 \rwpeg t_1 \rwpeg \cdots \rwpeg t_n\not\rwpeg$. 
Due to the side conditions in the rules, each step on this derivation may include the application of several rules. Hence, we proceed by induction on $m$ where $m$ is the total number of rules applied (including side conditions) in the above derivation. 
See  \S\ref{app:proofThAdeq1} for
more details. 
\end{proof}

We can use  this  theory as a (naive) PEG parser. It suffices to use the Maude's rewriting mechanism  on a PEG state: 
\begin{lstlisting}

rewrite ('A <- "a" , 'B <- "b") ['A .'B] "a" "b" "c" .
result State: "c"
\end{lstlisting}

As noticed, our  specification 
follows exactly the semantic rules in Fig. \ref{fig:sempeglab},  which makes it easier to prove its correctness. 
However, the resulting rewrite theory  is not completely satisfactory due to the  conditional rules 
used in choices, sequences and negative predicates: a whole derivation must be built to check whether the associated conditions are met or not. 
As we know, PEGs are deterministic and 
it should be possible to 
make a more guided decision on, e.g., 
whether 
\textbf{ord.1} or \textbf{ord.2}
should  be used on a given string.   For that, the strategy on $e_1 \slash e_2$ could be: reduce  $e_1$ and, in the end, decide whether or not $e_2$ is discarded. 
This reduction strategy  resembles  a 
small-step semantics and we shall explore it in the next section. We shall show that the new specification is more efficient for  checking  $x \in \Lang{G}$
and, more importantly, it is  appropriate for other (symbolic) analyses. 
%This encoding resembles a  instead of a natural semantic as the one in Fig. \ref{fig:sempeglab}.  
%In the following section we introduce 
%such rewrite theory and show  that it  Later, we leverage this theory to perform  symbolic analyses on PEGs. 

\subsection{A More Efficient Rewrite Theory}\label{sec:moreefficient}
Semantic and logical frameworks are adequate tools for specifying and reasoning about different systems. By choosing the right abstractions in the framework, we can also have efficient procedures for such specifications. In the following, we introduce a rewrite theory  
that gives an alternative representation of   choices and failures in PEGs. 

We first introduce semantic-level constructs for negation, choices, sequential composition and replication: 
\begin{lstlisting}

op NEG :    State State -> State [ frozen (2)]  .
op CHOICE : State State -> State [ frozen (2)] .
op COMP :   State State -> State [ frozen (2)] .
op STAR :   State State -> State [ frozen (2)] .
\end{lstlisting}

Note that such operators are defined on states. 
The attribute \lstinline{frozen} will be explained in brief. 

The rules \lstinline{empty}, \lstinline{Terminal12} and 
\lstinline{Terminal3} are the same as in $\rwpeg$. Note that those rules
are not conditional.  

Let us introduce the new rules:
\begin{lstlisting}

rl [NTerm] : (G, N <- e) [N] x => (G, N <- e) [e] x .
rl [Sequence] :  G[e . e'] x  => COMP(G[e]x ,  G[e']x ) .
rl [Seq1] : COMP(y ,  G[e']x) =>  G[e'] y .
rl [Seq2] : COMP(fail , S) => fail .
\end{lstlisting}

In \code{NTerm},   \code{N} is ``simplified'' to the corresponding expression \code{e}. 
The rule \code{Sequence} replaces the  PEG constructor  \lstinline{e.e'}   with the corresponding  semantic-level constructor  \code{COMP}. The meaning of \code{COMP} is given by the two last rules:  
 if the first component succeeds  returning  \code{y}, then the second expression \code{e'} must be evaluated against the input \code{y}. Moreover, 
if the first component fails, then the whole expression fails. 

The rules for the other constructors follow similarly: 

\begin{lstlisting}

rl [Choice] : G[e / e']x => CHOICE(G[e] x , G[e'] x) .
rl [Choice1] : CHOICE(x , S) => x .
rl [Choice2] : CHOICE(fail , S) => S .
rl [Star] : G[e *]x =>  STAR(G[e] x , G[e *] x) .
rl [Star1] : STAR(fail,  G[e *] x ) => x .
rl [Star2] : STAR(y ,  G[e *]x) => STAR(G[e] y ,  G[e *]y) .
rl [Negative] : G[! e] x  => NEG(G[e]x ,  G[! e] x ) .
rl [Neg1] : NEG(fail,  G[! e] x) => x .
rl [Neg2] : NEG(y, S) => fail .
\end{lstlisting}

The first rule in each case reduces the current expression to the corresponding constructor; the second and third rules decide whether 
the expression fails or succeeds. For instance, 
\lstinline{Choice1} 
discards the second alternative if the first one succeeds and \lstinline{Choice2}  selects the second alternative if the first one fails.

Rewriting logic is an inherent concurrent system where rules can be applied at any position/subterm of a bigger expression. Hence, the use of the attribute \lstinline{frozen} is important to 
guarantee the adequacy of our specification.
This attribute indicates  that   the second argument (of sort \code{State}) of the operators \code{COMP, CHOICE, STAR} and \code{NEG} cannot be subject of reduction. In words, the state \code{S'} in \code{CHOICE(S,S')}
is not reduced until the first component is completely reduced. 

We shall use $\Rwpeg$ to denote the induced rewrite relation of the above specification. 

%Let us use $\cR_1$ to denote the rewrite theory in Section \ref{sec:rw} and $\cR_2$ the theory introduced above. 
\begin{theorem}[Adequacy]\label{th:adq2}
Let   $G$ be a grammar, $e$ a parsing expression and  $x$ a string. Then the following holds: 

\noindent\textbf{(1)}  \code{G[e]x} $\rwpegB$ \code{y} \qquad  \ \quad iff \qquad 
 \code{G[e]x} $\RwpegB$ \code{y} 
 
\noindent\textbf{(2)} \code{G[e]x} $\rwpegB$ \code{fail} \qquad iff \qquad 
 \code{G[e]x} $\RwpegB$ \code{fail} 
\end{theorem}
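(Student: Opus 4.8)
The plan is to avoid proving anything about $\rwpeg$ directly and instead lean entirely on the already-established Adequacy Theorem (Theorem~\ref{th:adeq}). It suffices to show that the second theory is \emph{also} adequate for the natural semantics $\Lp$, i.e.\ that for every normal form $S$ (which, exactly as in Theorem~\ref{th:adeq}, is either a string $y$ or \code{fail}),
\[
\PEGstate{G}{e}{x} \Lp S \quad\Longleftrightarrow\quad \code{G[e]x} \RwpegB S .
\]
Once this is in place, chaining it with the two equivalences of Theorem~\ref{th:adeq} gives, for a string $y$, $\code{G[e]x}\rwpegB y \iff \PEGstate{G}{e}{x}\Lp y \iff \code{G[e]x}\RwpegB y$, which is item~(1), and the analogous chain with \code{fail} gives item~(2).

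The technical core I would isolate first is a set of four \emph{decomposition lemmas}, one per semantic-level operator, describing the shape a reduction must take. For \code{COMP}: $\code{COMP}(t,\code{G[e'] z}) \RwpegB S$ iff there is a normal form $S_1$ with $t \RwpegB S_1$ and either $S_1=y$ and $\code{G[e'] y}\RwpegB S$, or $S_1=\code{fail}$ and $S=\code{fail}$; for \code{CHOICE} and \code{NEG}, the analogous statements, with no recursion; for \code{STAR}, the statement $\code{STAR}(t,\code{G[e *] z})\RwpegB S$ iff $t\RwpegB S_1$ with either $S_1=\code{fail}$ and $S=z$, or $S_1=y$ and the recursive configuration $\code{STAR}(\code{G[e] y},\code{G[e *] y})\RwpegB S$. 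Each of these follows from inspecting $R$: the only rules whose left-hand side mentions \code{COMP} are \code{Seq1} and \code{Seq2}, both of which demand a normal-form first argument, and the \code{frozen} attribute forbids rewriting in the second argument; hence every step out of $\code{COMP}(t,\dots)$ before the \code{COMP} is consumed happens strictly inside $t$, in a subterm disjoint from the frozen component, so these steps can be permuted to the front and a sub-derivation $t\RwpegB S_1$ extracted. I would also record the trivial remarks that $\code{G[e/e']x}$, $\code{G[e . e']x}$, $\code{G[e *]x}$, $\code{G[! e]x}$ and $\code{G[A]x}$ each have the same normal forms as, respectively, the corresponding \code{CHOICE}/\code{COMP}/\code{STAR}/\code{NEG} term or (for non-terminals) $\code{G[P(A)]x}$, because \code{Choice}/\code{Sequence}/\code{Star}/\code{Negative}/\code{NTerm} is the unique rule applicable to those states.

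With these lemmas the two implications are routine. For ($\Rightarrow$) I would induct on the height of the derivation of $\PEGstate{G}{e}{x}\Lp S$: each rule of Fig.~\ref{fig:sempeglab} is simulated by first firing the corresponding unfolding rule, then using the induction hypothesis to drive the (non-frozen) first argument of the semantic-level operator to its normal form, then firing the matching \code{Seq}$i$/\code{Choice}$i$/\code{Star}$i$/\code{Neg}$i$ rule; for \textbf{rep.2} the premise $\PEGstate{G}{e *}{y}\Lp z$, via the induction hypothesis and the ``same normal forms'' remark for $\code{G[e *]y}$, supplies the tail reduction. For ($\Leftarrow$) I would induct on the length $n$ of $\code{G[e]x}\RwpegB S$ and case-split on the first rule applied: \code{empty}, \code{Terminal12}, \code{Terminal3} are immediate; \code{NTerm} is the induction hypothesis on $\code{G[P(A)]x}$; and \code{Sequence}, \code{Choice}, \code{Star}, \code{Negative} are each dispatched by applying the relevant decomposition lemma, which reduces the goal to one or two subgoals about $\code{G[\cdot]\cdot}$ configurations with strictly shorter derivations — in the \code{Star} case the recursive subgoal $\code{STAR}(\code{G[e] y},\code{G[e *] y})\RwpegB S$ is shorter because reducing $\code{G[e]x}$ to $y$ costs at least one step — so the induction hypothesis applies, and reassembling with \textbf{seq.1/2}, \textbf{ord.1/2}, \textbf{rep.1/2}, \textbf{not.1/2} closes the case. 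This same induction incidentally shows that the only normal forms reachable from $\code{G[e]x}$ are strings and \code{fail}, so no \code{COMP}/\code{CHOICE}/\code{STAR}/\code{NEG} term ever gets stuck.

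The only place where genuine care is needed, and hence the main obstacle, is the justification of the decomposition lemmas: making precise that, thanks to \code{frozen(2)} and the non-overlapping left-hand sides of \code{Seq1}/\code{Seq2} (and their siblings), any reduction emerging from $\code{COMP}(t,u)$ factors as ``rewrite entirely inside $t$ to a normal form, then apply \code{Seq1} or \code{Seq2}''. This is a small standardization/commutation argument about rewrites in disjoint subterms; everything after it is bookkeeping that mirrors, rule by rule, the proof of Theorem~\ref{th:adeq}. (One may additionally note that termination and confluence of $\Rwpeg$ are not needed for the statement, but follow a posteriori since $\Lp$ is a function.) Full details are deferred to the appendix.
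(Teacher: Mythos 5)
Your proposal is correct, and its core — the observation that \code{frozen(2)} together with the normal-form-demanding left-hand sides of \code{Seq1}/\code{Seq2} (and their siblings) forces every reduction out of \code{COMP}/\code{CHOICE}/\code{STAR}/\code{NEG} to factor as ``reduce the first argument to a normal form, then fire the discriminating rule'' — is exactly the observation the paper's own proof of the ($\Leftarrow$) direction relies on, though the paper states it only informally. Where you genuinely diverge is in the overall organization: the paper compares $\rwpegB$ and $\RwpegB$ head-on (inducting on the number of rule applications, including side conditions, of a $\rwpeg$-derivation for one direction, and on the step count of a $\Rwpeg$-derivation for the other), whereas you prove adequacy of $\Rwpeg$ with respect to $\Lp$ and then compose with Theorem~\ref{th:adeq}. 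Both routes are sound; yours buys a cleaner induction in the ($\Rightarrow$) direction (height of a natural-semantics derivation rather than a count over nested conditional-rewrite side conditions, which is the more delicate measure in the paper's version) plus a reusable statement ($\Lp$ iff $\RwpegB$) and explicit decomposition lemmas, at the modest cost of depending on Theorem~\ref{th:adeq}; the paper's route is self-contained between the two rewrite theories. Your bookkeeping is careful where it needs to be: the well-foundedness of the \code{STAR} case (the recursive configuration is reached only after at least one step spent reducing \code{G[e]x} to \code{y}) and the remark that the induction itself certifies that no semantic-level constructor gets stuck are precisely the points one must not elide.
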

\begin{proof}
We shall  prove (1) and (2) simultaneously, i.e., we shall show that 
$\PEGstateD\rwpegB S $ iff 
$\PEGstateD\RwpegB S $ for $S \in \{\Lfail, x\}$. 

\noindent $(\Rightarrow)$
We proceed by induction on the total number of rules (including side conditions) used in the derivation $\PEGstateD \rwpegB S$. The result is not difficult by noticing that 
 in $\rwpeg$, the condition of  the rules are satisfied by shorter derivations and then, by induction, one can show that the second and third  rules in the respective cases for $\Rwpeg$ mimic the same behavior.
 
 \noindent $(\Leftarrow)$
 Assume that $\PEGstateD \RwpegB S$. There are no side conditions but there are some extra intermediate steps due to the semantic-level constructors \code{CHOICE}, \code{COMP}, etc and the rules
 \code{Choice}, \code{Sequence}, etc. We note that 
 \code{COMP(S, S')} is not a normal form: if $S$ is a normal form, then \code{Seq1} or \code{Seq2} are used to continue reducing the term. Also, due to the \code{frozen} attribute, \code{S'} is never reduced in the scope of the \code{COMP} constructor. Similar observations apply for  \code{CHOICE}, \code{NEG} and \code{STAR}. We then proceed by induction on the total number of steps needed  to show $\PEGstateD \RwpegB S$. The base cases are easy. Consider the inductive case when the derivation starts with the rule \code{Negative}. Due to the above observations, we are in the following situation: 
 \[
 \PEGstate{G}{!e}x \Rwpeg \texttt{NEG}(\PEGstate{G}{e}{x}, \PEGstate{G}{!e}{x} ) \RwpegR 
 \texttt{NEG}(S_c, \PEGstate{G}{!e}{x} ) \Rwpeg S' 
 \]

 where $S_c$ is a normal form and the  rule applied on that state  can be either \code{Neg1} or \code{Neg2} depending on $S_c$. 
 This means that there is a shorter derivation for 
 $\PEGstate{G}{e}{x}\RwpegB S_c$ and the result follows by induction and applying {\bf not.1}
or  {\bf not.2} accordingly. 
\end{proof}

From the above result, determinism and Theorem \ref{th:adeq}, we know that 
$\Rwpeg$ is a function. Also, a more direct proof of that is possible: By simply inspecting the  rules and noticing that all the left-hand sides   are pairwise distinct  and only the left-most state can be reduced (due to the \code{frozen} attribute).

The analyses we are currently working on (see \S \ref{sec:conc} and \S \ref{sec:symbolic}), use  symbolic techniques in Maude, as \emph{narrowing}, that do not support conditional rewrite rules. This is one of the reasons to prefer $\Rwpeg$ over $\rwpeg$. Moreover,   avoiding conditional rules, $\Rwpeg$ outperforms $\rwpeg$ for membership checking. As a simple benchmark,
consider the following grammar  recognizing the language $a^nb^nc^n$:
\begin{lstlisting}

  S <- ( &(R1 "c") ) ("a" +) R2 (! [.]) , 
  R1 <- "a" (R1 ?) "b", 
  R2 <- "b" (R2 ?) "c"
\end{lstlisting}

In $\rwpeg$, the instance   $n=6$  is recognized in  30 seg, and the instance $n=7$ in 4.1 min.   Hence, this specification will be of little help for  practical purposes. $\Rwpeg$ recognizes the instance 
 $n=1000$ in less than one second. 
Although some improvements can be done such as using indices to avoid carrying in each state the fragment of the string being processed, 
 state of the art parsers for PEGs (e.g.,
 Mouse \cite{Redziejowski2015MOUSEFP}, 
  Rats! \cite{DBLP:conf/pldi/Grimm06} and 
Parboiled \cite{DBLP:journals/corr/abs-1907-03436})
   can certainly do much better than that. Our goal is not to build a parser but to provide a formal framework for PEGs and explore reasoning techniques for it. 
Hence, we prefer to keep the specification as simple and  general as possible
to widen the spectrum of
analyses that can be performed on it. 
The reader may have probably recognized the simplicity of the formal specification, driven directly from the syntax and semantic rules presented in \S \ref{sec:pegs}.

\section{Backtracks and Cuts  in PEGs}\label{sec:sym}
%!TEX root = main.tex

This section builds on the previous rewrite theory 
  to
study backtracks  in ordered choices. For that, we first introduce semantic rules for the \emph{cut} operator  in \cite{DBLP:conf/paste/MizushimaMY10}. By 
studying formally the meaning of cuts,  we   propose a deeper cut operator (\S \ref{sec:global})  that is able to save more  computations. 
In \S \ref{sec:bound} we show how local and global/deeper cuts can be uniformly  introduced in PEGs and
some extra optimizations  are presented in 
 \S \ref{sec:simpl}. 
 Finally, \S \ref{sec:bench} reports some experiments on grammars annotated with cuts. 

%As a second application, we show how symbolic techniques can be used to build a derivative parsing algorithm. Such algorithm is correct and complete by construction and can be used to identify bugs in grammars and also as a basis for other analyses. 
 
\subsection{Semantics for Cuts} 
\label{sec:back}

 The cut operator, inspired in 
 Prolog's   cut,
 was proposed in  \cite{DBLP:conf/paste/MizushimaMY10} with the aim of   better  controlling   backtracks in PEGs. The idea is 
 to annotate the grammar $G$ with cuts 
 leading to a modified grammar $G'$   in such a way that the  language recognized by 
 $G$ and $G'$ is the same but 
 $G'$ may avoid 
 some backtracks in choices  during parsing.  In fact, as shown in \cite{DBLP:conf/paste/MizushimaMY10}, this technique allows for  dynamically reclaiming the unnecessary space for memoization in those branches. 
 
 Cuts, as proposed in  \cite{DBLP:conf/paste/MizushimaMY10}, 
  cannot be introduced on arbitrary positions of a parsing expression. They only make  sense when the backtracking mechanism needs to be controlled. Hence, 
 there is a restricted syntax for them: 
\[
e ::= ... \mid e_1 \Pcut e_2 \slash e_3 ~\mid~ (e_1 \Pcut e_2) *
\]
 The intended meaning for the $\Pcut$ operator is: 

 \noindent \textbf{-} In $e_1 \Pcut e_2 \slash e_3$, if $e_1$ succeeds, $e_2$ is evaluated and $e_3$ is never considered even if $e_2$ fails. 

 \noindent \textbf{-} In $(e_1 \Pcut e_2) *$, when   $e_1$ fails, the entire expression succeeds. If $e_2$ fails (after the successful matching of $e_1$) the whole expression fails.

Consider for instance the following grammar:
\begin{lstlisting}[escapeinside={/*}{*/}]

       S <- TIF /*$\Pcut$*/ "(" ...  /  TWHILE /*$\Pcut$*/ "(" ... / TFOR ...
\end{lstlisting}

Clearly, if the token ``\texttt{if}'' was read from the input, a failure occurring right after that point does not need to backtrack to consider the other alternatives, 
that inevitable will also fail.  The introduction of the cut guarantees that: (1) information about the other alternatives can be discarded once the rule \code{TIF} succeeds (thus saving space); and (2) there is no need to explore  other alternatives after failure (thus failing faster).  

Adding cuts to a grammar needs to be done carefully. For instance, the languages generated by  the expressions 
\[\qquad \qquad a\ e_1 \slash a \ e_2 \qquad\mbox{ and} \qquad a\ \Pcut\ e_1 \slash a \ e_2
\]
are not necessarily the same. Algorithms for 
adding cuts to grammars are  proposed in \cite{DBLP:conf/paste/MizushimaMY10}. 
Roughly,  the \texttt{FIRST} set is used to check whether the alternatives  are disjoint. 
If this is the case,  it is safe to introduce a cut. 

We can formally state the semantics of cuts in the scope of choices through the following rule: 
\begin{lstlisting}[escapeinside={/*}{*/}]

rl [Choice^] CHOICE(G[/*$\Pcut$ */ e] x , S) =>  G[e] x.
\end{lstlisting}

This rule reflects the fact that,  once the symbol $\Pcut$
is found in the context of an ordered choice, the second 
alternative (the variable \code{S} of type \code{State}) can be discarded. 

Let us analyze the case of  repetition. Since this operator can be encoded using recursion, one may simply use the previous definition. However, in our framework,    \code{*} is not a derived constructor and 
we are forced to give meaning to  $(e  \Pcut e') *$.
The first thing we note is that the expression $ (e e') *$ never fails but $(e \Pcut e') *$ may fail. For instance, 
$(ab)*$ recognizes the string ``abac'' (returning ``ac'')
while    $(a\Pcut b)*$ fails to match the same string. This means that  the use of cuts in the context of repetitions requires extra care. In fact, 
the grammar transformations proposed in 
\cite{DBLP:conf/paste/MizushimaMY10} 
translates expressions of the form $e* ~e'$ into $(!e'~\Pcut e)* e'$  whenever   $e,e'$ are both non-nullable expressions \cite{DBLP:journals/fuin/Redziejowski09} (i.e., they do not recognize the empty string) and $e,e'$ are disjoint. Hence, cuts in repetitions appear in  very controlled ways.

We can give meaning to $(e \Pcut e') *$ by dividing its execution into two steps: when processing $e$, 
failures cause the success of the whole expression; when processing $e'$, a failure must cause the failure of the whole expression. 
For that, besides the semantic-level operator 
\code{STAR} introduced in the previous section, we also consider the following one:

\begin{lstlisting}

op STAR^ :  State State -> State [ctor frozen (2)] .
\end{lstlisting}

This operator will be used to keep track of the execution of the  expression $e'$ as follows: 
\begin{lstlisting}[escapeinside={/*}{*/}]

rl [Star] :  STAR(G[/*$\Pcut$*/ e'] x' , S ) => STAR^( G[e'] x', S) .
rl [Star^] : STAR^(y , G[ e *] x) => STAR(G[e] y, G[e *] y) .
rl [Star^] : STAR^(fail, S) => fail .
\end{lstlisting}

The first rule makes the transition from  \lstinline|STAR| to \lstinline|STAR^|. This happens when the  current expression  is a  cut. The second rule models the recursive behavior: if $e'$ succeeds, then we are back into the \lstinline|STAR| state. The third rule reduces to $\Lfail$ if $e'$ fails. 

These rules formalize the behavior of cuts in agreement with the intended meaning  in  
\cite{DBLP:conf/paste/MizushimaMY10}. However, the treatment of cuts does not look uniform: the meaning of $\Pcut$ is given in the context of other parsing expressions and not in a general way. 
The next section shows that it is possible to give a 
more precise 
meaning to  failures, cuts and backtracks. On doing that, we propose a deeper cut operator 
with a more elegant semantics that, in some cases,  may avoid more backtracks than $\Pcut$. 

\subsection{Global Errors and Cuts}\label{sec:global}
The cut operator $\Pcut$ acts locally. Take for instance the  PEG 
\begin{lstlisting}[escapeinside={/*}{*/}]

  A <- T1 /*$\Pcut$*/ e1   /  T2 /*$\Pcut$*/ e2
  B <- T3 /*$\Pcut$*/ e3   /  T4 /*$\Pcut$*/ e4
  C <- A / B
\end{lstlisting}

where \code{T1,T2,T3,T4} are lexical non-terminal symbols  and the 
grammar satisfies the unique token prefix condition (Def. \ref{def:unique}). Assume also that: the input starts with the string recognized by \code{T1};  the initial expression is \code{C}; and
the expression \code{e1}  fails to match the input. Due to the semantics of $\Pcut$, 
the second alternative in \code{A} is discarded and   \code{A} fails.
However, such failure does not propagate
to \code{C} and the alternative \code{B}
is tried against the input.  Since all the tokens  are disjoint, after an attempt on  \code{T3} and \code{T4},  the expression \code{C}
finally fails.   The key point is: failures in sub-expressions are \emph{confined} and they do not \emph{propagate} to outer levels. 
In the following, we propose a new operator that acts as a cut but \emph{globally}. 

We first extend the set of states (Def. \ref{def:states})  with the constant \code{error}, denoting the fact that an \emph{unrecoverable} error has been \emph{thrown} and  therefore  the global expression must fail:

\begin{lstlisting}

op error : -> State .
\end{lstlisting}

Moreover, we add to the syntax of parsing expressions the operator $\throw$ that, intuitively, \emph{throws} an error: 
\begin{lstlisting}[escapeinside={/*}{*/}]

op throw : -> Exp .     --- Representation of /*$\throw$*/
op check : Exp -> Exp . --- Derived constructor
eq check(E) = E / throw . 
\end{lstlisting}

The expression 
\code{check(e)} tries to match the expression \code{e} on the current input. If it fails, an error is thrown.

As we saw, the definition of the cut operator $\Pcut$ is problematic since its behavior depends on the context where it is evaluated. The operator  $\throw$ can appear in any context and its semantics will be given in a uniform way. 
In fact, the definition is quite simple:  on any input, 
$\throw$   raises an error:

\begin{lstlisting}

rl [throw] : G[throw] x => error .
\end{lstlisting}

Since we have two failing states, namely  \code{fail} and \code{error}, the set of rules in   \S \ref{sec:moreefficient} must be extended 
accordingly: 

\begin{lstlisting}

rl [SeqE] :     COMP(error, S)    =>  error .
rl [ChoiceE] :  CHOICE(error, S)  =>  error .
rl [StarE] :    STAR(error, S)    =>  error .
rl [NegE] :     NEG(error, G[! e]x)  => x .
\end{lstlisting}

\begin{figure}

\resizebox{.48\textwidth}{!}{$
\begin{array}{c}
\infer[\mylabel{throw}]{G[\throw]  \; x \LpeP \Lerror}{}\\\\
\infer[\mylabel{seq.3}]{G[e_1 \; e_2]  \; x \LpeP \Lerror}{G[e_1]  \; x \LpeP \Lerror}\qquad
\infer[\mylabel{ord.3}]{\Matgk{e_1 \;\slash\; e_2}{x}{} \LpeP \Lerror}{\Matgk{e_1}{x}{} \LpeP \Lerror }\\\\
\infer[\mylabel{not.3}]{G[!e]  \; x \LpeP x}{G[e] \;  x \LpeP \Lerror}\qquad
\infer[\mylabel{rep.3}]{G[e*]  \; x \LpeP \Lerror}{G[e]  \; x \LpeP \Lerror}
\end{array}
$}
\caption{Global errors. $\LpeP$ extends $\Lp$ in  Fig \ref{fig:sempeglab}. 
 \label{fig:error}}
\end{figure}

These definitions allow for the propagation of the error to the outermost context  as the Example \ref{ex:error} below shows. 
From now on, we shall use $\Lpe$ to denote the induced rewriting relation from the  theory above. 
The new rules in natural-semantic style are in Figure \ref{fig:error}.

\begin{example}\label{ex:error}
Consider the following grammar   for \code{labeled}, \code{goto} and \code{break} statements: 
\begin{lstlisting}[escapeinside={/*}{*/}]

  St <- labeledSt / jumpSt 
  labeledSt <-  ID check(COLON statement) / 
	 	CASE constantExp COLON statement 
  jumpSt <- GOTO check(ID SEMICOLON) /  BREAK SEMICOLON
\end{lstlisting}
Let $G'$ be as $G$ but  removing the \code{check()} annotations. Since the token expressions 
\code{ID}, \code{CASE}, \code{GOTO} and \code{BREAK} are all disjoint, we can show that:

\noindent\textbf{(1)} \lstinline|G[St]x| $\LpeR$ \lstinline|error|
implies  \lstinline|G'[St]x| $\RwpegR$ \lstinline|fail|. 

\noindent\textbf{(2) } \lstinline|G'[St]x| $\RwpegR$ \lstinline|fail|
implies \lstinline|G[St]x| $\Lpe^* l\in\{$ \lstinline|fail, error|$\}$.

In (1), the $\Lpe$-derivation does not 
need to match the (useless) alternatives, thus 
failing in fewer (or equal) steps when compared to the corresponding   $\Lp$-derivation
\end{example}

In the  example above, the number of backtracks 
is reduced when processing syntactically invalid inputs. On valid inputs, the number of backtracks remains the same.  
When predicates are involved, it is also possible to save some few (unnecessary) backtracks. 

\begin{example}\label{ex:pallene}
According to the ISO 7185 and ISO 10206 standards, Pascal allows comments opened with \code{(*} and closed with \lstinline|}|.  Consider the following grammar:
\begin{lstlisting}

comment <- open  (!close [.]) * close
open <- "(" "*" / "{"
close <- "*" check(")") / "}"
\end{lstlisting}

On   input ``\{ comment *here* *)'', the first and second ``*'' fail immediately (without trying to match ``\}'') and \code{!close} succeeds faster. 
Of course, this does not save much effort. 
As another example, a typical rule for 
identifiers looks like this:
\begin{lstlisting}

ID <- ! KEYW [a-zA-Z] ([a-zA-Z0-9_] *)
\end{lstlisting}
Some cuts can be added to the definition of reserved words, thus making the above predicate to fail faster:

\begin{lstlisting}

KEYW <- 'a' 'n' check('d') / 'a' check('s') / 
	'b' 'o' check('o' 'l') / 'b' 'r' check('e' 'a' 'k')...
\end{lstlisting}
On input ``bot'',  \code{ID} succeeds and only the first three choices of \code{KEYW} will be evaluated. 
%
%As a more compelling example, consider 
%  Pallene \cite{DBLP:conf/sblp/GualandiI18}, a  statically-typed programming language derived from Lua, and the grammar below: 
%
%\begin{lstlisting}
%
%expsuffix <- func-args / ":" check(NAME func-args) / 
%	     "[" check(exp "]") / "." ! [.] NAME
%var <- suffixedexp / NAME ! expsuffix
%\end{lstlisting}
%The rule \code{var} is highly used. Then, failing faster in \code{expsuffix} implies succeeding faster in \code{!expsuffix}. 

\end{example}

Similar to $\Pcut$, a careless use of  $\throw$ may change the language recognized. Indeed, the situation is aggravated by the fact that errors propagate to outer levels. 
For instance, if \code{St}  in Example \ref{ex:error} is extended with a third choice recognizing \code{ID},  a failure in \code{labeledSt} should not be propagated to \code{St} since the new third alternative 
 cannot be discarded. 
Hence, it is salutary  to control the propagation of failures: a failure in \code{labeledSt} should avoid the (unnecessary) matching of \code{jumpSt} but such failure must remain \emph{confined} to these two alternatives.  This local behavior (akin to the one of $\Pcut$)  in combination with global failures will be explored in the next section.

\subsection{From Global to Local Errors}\label{sec:bound}
Now we introduce a \emph{catch} mechanism
to control the  global behavior of $\throw$
and confine errors when needed. 
The following definitions extend the syntax of expressions and states: 

\begin{lstlisting}

op catch : Exp -> Exp .     --- Catch operator (syntax)
op CATCH : State -> State . --- Semantics (on states)
\end{lstlisting}

The intended meaning is the following. If the expression \code{e} succeeds, then \code{catch(e)} also succeeds. If \code{e} fails  with final outcome $l\in\{$\code{error, fail}$\}$, the expression \code{catch(e)} fails with output \code{fail}.  This is formalized with  the following rules:
\begin{lstlisting}

rl [Catch]  : G[catch(e)] x  => CATCH(G[e] x) .
rl [Catch1] : CATCH(x) => x .
rl [Catch2] : CATCH( fail )  => fail .
rl [Catch3] : CATCH( error ) => fail .
\end{lstlisting}

As in the previous cases, the first rule moves from the syntactic level to the semantic operator at the level of \code{State}s.  The last three rules act on normal forms
materializing the intuition given above: \code{error}s are transformed into \code{fail}ures. 

\begin{example}\label{ex:error2}
Consider the production rules for labeled and jump statements in Example \ref{ex:error} and the rules below:

\begin{lstlisting}

  S1 <- catch(labeledSt / jumpSt) / assignSt
  S2 <- catch(labeledSt) / jumpSt / assignSt
  assignSt <- ID EQ expr SEMICOLON  
\end{lstlisting}
Since the token \code{ID} appears in both \code{assignSt} and \code{labeledSt}, 
\code{error}s  must be confined to guarantee that 
the \code{check(.)} annotations do not modify 
the language recognized. 
\code{S1} and \code{S2} are two alternative solutions. 
Consider the input  string ``\code{x = 3 ;}''. 
In \code{S1}, \code{labeledSt}
produces \code{error} and \code{jumpSt} is not evaluated (global behavior). This error is confined 
in \code{S1} (local behavior) and 
\code{assignSt} successfully recognizes  the input. 
A similar situation happens in \code{S2} but, when \code{labelSt} fails, \code{jumpSt} is also evaluated (and fails). 
Consider now the input ``\code{goto l :}'' (note the ``:'' instead of ``;''). In \code{S1}, the failure of \code{jumpSt} is confined and \code{assignSt} is (unnecessarily) evaluated. 
In \code{S2}, such failure preempts the execution of 
\code{assignSt} and fails faster. 
\end{example}

Some other transformations on this grammar  are possible to control better the shared  \code{ID} token between \code{labeledSt}
and \code{assignSt}. 
For instance, it is possible to join together 
assignments  and label statements, with the 
clear disadvantage of making the grammar more difficult to read and understand. 
The use of \emph{catch} expressions is common  in programming languages, thus making the grammar above more comprehensible.  

We shall use $\Lpec$ to denote the resulting relation
extending $\Lpe$ with the rules above. 
Figure \ref{fig:pegcatch} depicts the corresponding rules in natural-semantic style. 

\begin{figure}
\resizebox{.46\textwidth}{!}{
$
\begin{array}{c}
\infer[\mylabel{catch.1}]{G[ \catch{e}]  \; x \LpecP y}{G[e]  \; x \LpecP y}\qquad
\infer[\mylabel{catch.2}]{G[ \catch{e}]  \; x \LpecP \Lfail}{G[e]  \; x \LpecP \Lfail}\\\\
\infer[\mylabel{catch.3}]{G[ \catch{e}]  \; x \LpecP \Lfail}{G[e]  \; x \LpecP \Lerror}
\end{array}
$
}
\caption{Semantic rules for the \emph{catch} mechanism. \label{fig:pegcatch}}
\end{figure}

\subsection{Simplifying States}\label{sec:simpl}
Unlike $\Pcut$,  the $\throw$ operator has its own meaning independent of the context where it appears. In particular, it does not   need to be used only 
inside a choice operator. For instance, 
 the three expressions
 $\throw$, $a \throw$ and $a \throw b$
 are all  valid expressions (whose language is empty). 
 It is also worth noticing   the difference on how the two cut operators deal with the remaining alternatives: when the expression
\code{check(e) / e'}
is evaluated,  the second alternative \code{e'} is  only discarded 
when \code{e}  actually fails. Differently, 
the expression $\Pcut e \slash e'$
 discards eagerly 
  the second alternative,  thus saving some space. One then may wonder whether it is possible to reconcile the idea of \emph{saving memory} that motivated the introduction of cuts
in   \cite{DBLP:conf/paste/MizushimaMY10} with the 
behavior proposed here for  $\throw$. 
It turns out that the rewriting logic framework can give us some ideas on how to do that as described below. 

We first introduce an alternative version of  \code{check}:

\begin{lstlisting}

op try : Exp -> Exp .     --- Syntactic level
op TRY : State -> State . --- Semantic level
\end{lstlisting}

Unlike \code{check(.)}, \code{try(.)} is not a derived operator and hence, its meaning needs to be specified:

\begin{lstlisting}

rl [Try] :  G[try(e)] x => TRY(G[e] x) .
rl [Try1] : TRY(x) => x .
rl [Try2] : TRY( fail )  => error .
rl [Try3] : TRY( error ) => error .
\end{lstlisting}

Note the duality between \code{TRY} and \code{CATCH}: the former converts  \code{fail}ures  into \code{error}s
and the latter maps \code{error}s into \code{fail}ures. 
 Intuitively, \code{try(e)} succeeds if \code{e} succeeds and fails with \code{error} if \code{e} fails.
 We shall use $\Lptc$ to denote the extension of $\Lpec$ with the  rules for \code{TRY}. The additional rules are also depicted  in  Fig. \ref{fig:pegtry}. 

As a simple example,  the languages recognized by the expressions 
\code{"a"} and  \code{try("a")}
are the same. However, the final failing states are different on input "b":
\code{try("a")} ends with an \code{error} (that nobody caught).

\begin{figure}
\resizebox{.42\textwidth}{!}{
$
\begin{array}{c}
\infer[\mylabel{try.1}]{G[ \try{e}]  \; x \LptcP y}{G[e]  \; x \LptcP y}\qquad
\infer[\mylabel{try.2}]{G[ \try{e}]  \; x \LptcP \Lerror}{G[e]  \; x \LptcP \Lfail}\\\\
\infer[\mylabel{try.3}]{G[ \try{e}]  \; x \LptcP \Lerror}{G[e]  \; x \LptcP \Lerror}
\end{array}
$
}
\caption{Semantic rules for the \emph{try} mechanism. \label{fig:pegtry}}
\end{figure}

The following equivalences  are an easy consequence from  the definition of $\Lptc$ (and determinism):  

\begin{enumerate}
 \item \code{catch(try(e))} $\equiv$ \code{e} \ \quad\qquad\qquad\qquad\ \ \ \ \ \  (cancellation) 
% \item \code{[[e]c]c} $\equiv$ \code{[e]c} \qquad\qquad\qquad(idempotency)
 \item \code{catch(e) catch(e')} $\equiv$ \code{catch(e.e')}. \ \ \  \ \ \ \quad(distributivity)
 \item \code{catch(! e)} $\equiv$ \code{!catch(e)} \qquad \qquad \   \ \ \ \ \  \ \quad(distributivity)
\end{enumerate}

%If there are no occurrences of $\throw$ or $[\cdot]_e$ in $e$, we can also show that \code{[[e]e]c} $\equiv$ \code{e}. Moreover, 
Note that, in general, \code{catch(.)} does not distribute on choices.
As a simple counterexample,
$\catch{\throw \slash \epsilon}$  always reduces to $\Lfail$ while $\catch{\throw} \slash \catch{\epsilon}$  succeeds on any string. 
Note also that,  even if $\catch{\try{e}}$ and $e$ recognize the same language ($\catch{e}$ and $\try{e}$ accepts $x$ iff $e$ does),  their failing states are different.
Hence, $e \slash e'$ is not necessarily equivalent to the expression 
$\catch{\try{e}} \slash e'$ (i.e, $\equiv$ is not a congruence).

Now let us come back to the problem 
of saving the space needed to store backtracking information that will never be used. 
As explained in \S \ref{sec:rw}, equations in the equational theory can be used to ``simplify'' terms. 
The idea is to  extend  the equational theory, 
so   that we can add some structural rules that govern the state of the parser. The simplification proposed is the following: 

\begin{lstlisting}

eq [simpl] : CHOICE(TRY(S), S') = TRY(S). 
\end{lstlisting}

Here, the whole state \code{S'} can be ignored since it will never be evaluated. Note that this simplification is sound due to the semantics of \code{CHOICE}: if $S$ fails then, \code{TRY(S)} reduces to \code{error} and \code{S'} will  never be evaluated.
Note that such simplification applies only when \code{TRY} is in the immediate scope of a \code{CHOICE}. 
Moreover, \code{[simpl]}  may simplify several choices. For instance, in this theory,  \code{CHOICE(CHOICE(TRY(S), S'), S'')} 
is equal to \code{TRY(S)} (discarding \code{S'} and \code{S''}). 
%On the other side,  the equation \code{[simpl]}
%does not apply (since the left-hand side of the equation does not match) to the term 
% \code{CHOICE(CATCH(TRY(S)), S') }.  In fact, due to Lemma \ref{lemma:eq}, this state is equivalent to  \code{CHOICE(CATCH(S), S') } 
% and  failures in \code{S} (in any mode) do not prevent the evaluation of  the state \code{S'}. 

Let $\Lptc'$ be the extension of
$\Lptc$ with the equation above. 

\begin{theorem}\label{th:adc-simpl}
\code{G[e] x}
 $\LptcB$ \code{S}
\qquad iff \qquad
 \code{G[e] x}
  $\LptcBP$
\code{S}. 
\end{theorem}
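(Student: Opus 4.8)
The plan is to read \code{[simpl]} as identifying two \emph{observationally equivalent} states and to conclude that adding it to the equational theory cannot change the outcome of reducing an initial state \code{G[e]x}. The two sides of the equation, \code{CHOICE(TRY(S),S')} and \code{TRY(S)}, have sort \code{State}, and every \code{State}-term can only reach the normal forms $\Lfail$, $\Lerror$, or a string (so the \code{S} in the statement must be one of these); ``observationally equivalent'' therefore means ``reach the same $\Lptc$-normal form in every context''. Since \code{G[e]x} is itself \code{[simpl]}-irreducible (it has no subterm of the form \code{CHOICE} of a \code{TRY}), its $\Lptc'$-normal form is computed by interleaving the rules of $\Lptc$ with \code{[simpl]}-simplification of the intermediate terms, and it suffices to show that those simplifications never alter the reachable normal form.

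The first step is a local lemma. By inspecting \code{Try}, \code{Try1}, \code{Try2}, \code{Try3}, the term \code{TRY(S)} reduces under $\LptcB$ to a string exactly when \code{S} does, and to \code{error} exactly when \code{S} reduces to \code{fail} or to \code{error}; crucially it never reaches \code{fail}. Because \code{CHOICE} is declared \code{frozen (2)}, in \code{CHOICE(TRY(S),S')} the second argument stays frozen until the first is a normal form, and that normal form is either a string --- whereupon \code{Choice1} returns it --- or \code{error} --- whereupon \code{ChoiceE} returns \code{error}; the configuration \code{CHOICE(fail,S')} never occurs. Hence \code{CHOICE(TRY(S),S')} and \code{TRY(S)} have the same reachable normal form. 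I then lift this through contexts: every semantic-level operator (\code{COMP}, \code{CHOICE}, \code{STAR}, \code{NEG}, \code{CATCH}, \code{TRY}) evaluates its non-frozen \code{State} argument to a normal form before the matching \code{Seq1}/\code{Seq2}, \code{Choice1}/\code{Choice2}/\code{ChoiceE}, \code{Try1}/\code{Try2}/\code{Try3}, \ldots\ rule can fire, and \code{State}s are built only through these operators; so a \code{State}-subterm affects the whole computation only through its reachable normal form. Consequently, replacing such a subterm by a \code{[simpl]}-equivalent one anywhere inside \code{G[e]x} --- and hence full \code{[simpl]}-normalization --- preserves the reachable $\Lptc$-normal form.

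The second step is the two implications, each proved by translating one derivation into the other. For ($\Rightarrow$), assuming \code{G[e]x} $\LptcB$ \code{S}, a $\Lptc$-derivation from \code{G[e]x} to \code{S} is replayed step by step in $\Lptc'$: every rule of $\Lptc$ is a rule of $\Lptc'$, and $\Lptc'$ merely interleaves \code{[simpl]}-simplifications, which by the lifting lemma do not change reachable normal forms, so the (finite) replay terminates at \code{S} because \code{S} is a normal form, giving \code{G[e]x} $\LptcBP$ \code{S}. For ($\Leftarrow$), a $\Lptc'$-derivation of \code{G[e]x} to \code{S} is turned into a $\Lptc$-derivation by deleting its \code{[simpl]} steps and absorbing each deleted step with the lifting lemma; the surviving rule-steps, read in $\Lptc$, drive \code{G[e]x} to a term with reachable normal form \code{S}, hence --- since $\Lptc$ is a function --- to \code{S} itself.

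The main obstacle is making these translations rigorous, i.e.\ checking that \code{[simpl]} \emph{commutes} with the rewrite rules so that no rule-step is lost or spuriously created when intermediate terms are simplified (equivalently, that $\Lptc'$ remains a coherent, confluent executable theory). The observation that makes this go through is that \code{[simpl]} can only fire on a \code{CHOICE} whose first argument is the \emph{not-yet-reduced} term \code{TRY(S)}: at that position no rule of $R$ is enabled --- the only enabled steps lie strictly inside \code{TRY(S)} --- and a short case analysis on whether the reduction of \code{TRY(S)} ends in a string or in a failure shows that applying \code{[simpl]} before or after any competing step yields the same result. Discharging this local commutation is where the real work lies; the remaining lemmas are the routine case analyses sketched above.
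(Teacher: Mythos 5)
Your proposal is correct and rests on exactly the same key observation as the paper's proof: \code{TRY(S)} can only reach a string or \code{error}, never \code{fail}, so \code{Choice2} is never enabled on \code{CHOICE(TRY(S),S')} and the second argument is dead code, making \code{[simpl]} outcome-preserving. The additional lifting-through-contexts and commutation arguments you supply are just a more explicit rendering of what the paper leaves implicit.
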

\begin{proof}
Observe that 
the term \code{TRY(S)} either reduces to $x$ or fail with label \code{error}. In the first case, \code{CHOICE(TRY(S), S')} reduces to $x$ and, in the second case, reduces to \code{error}. Hence, the result depends only on the outcome of \code{TRY(S)}. 
\end{proof}

\subsection{Experimental Results}\label{sec:bench}

In this section we present some benchmarks performed on grammars manually annotated with the 
cut constructors proposed here. When annotating a grammar, given a concatenation $e_1\,e_2$,
the general idea is to annotate the symbols in $e_2$ since those in $e_1$ match at least one
input symbol (if $e_1$ is not nullable). When $e_1$ matches the input, we are in a (local or global) right path, so we can safely discard other alternative paths and avoid backtracking when $e_2$ fails. 
%We are currently experimenting with automatic methods for annotating grammars following 
%the ones proposed in \cite{DBLP:conf/paste/MizushimaMY10}. Also, for global failures, 
%\cite{DBLP:journals/scp/MedeirosJM20} may offer some interesting alternatives.  

The Maude's specification can be seen as an abstract machine whose derivation steps  correspond, approximately, to the operations performed by a parser. We shall report the number of entry rules  that need to be applied to reduce $G[e]x$ into a final state. 
Such number corresponds to the number of times a PEG constructor  is evaluated. 
For instance, if $G=\{A \leftarrow a\}$, the expression $G[Ab] ab$ reduces to $\epsilon$ in 4 steps: $1_\times$ for sequential composition, $2_\times$ 
for the terminal rule and  $1_\times$ 
for the non-terminal rule. 
With this methodology,  it is not surprising that, on valid inputs,  the annotated grammar   performs more steps since entering on $\try{e}$  also counts as a step. 
We also report  some time measures
to show that it is feasible to use our formal  framework for concrete experiments. \\

%The first naive example is a PEG recognizing regular-like expressions. Consider the following rules:
%\begin{lstlisting}[escapeinside={/*}{*/}]
%
%S <- (a *) b c / (a *) c  /*and*/   Sc <- (a *) b [c]e / (a *) c
%\end{lstlisting}
%Clearly, on invalid inputs of the form $a^nbd$ with $d\neq c$, the rule \code{Sc} fails faster by avoiding checking the second premise. Writing  
%\code{S} as \code{(a*).(bc / c)}
%is  certainly a good alternative. 
%
%
%Consider the grammar for recognizing the language $a^nb^nc^n$ in the end of Section \ref{sec:rw}. 
%We can make \code{R1} to fail faster when there is no a ``b'' in the end of $a^n$. Similarly for \code{R2}:
%\begin{lstlisting}
%
%S <- ( &(R1 [c]e) ) (a +) R2 ! [.] 
%R1 <- a (R1 ?) . [b]e     R2 <- b (R2 ?) . [c]e, 
%\end{lstlisting}
\begin{table*}[!t]
\caption{Experiments with JSON and C89. \label{table-results}}
% !TEX root = main.tex

\begin{tabular}{| r |  r |  r |  r |  r |  r |  r |}
\hline
   \multicolumn{7}{|c|}{\textbf{JSON}}\\\hline
\hline 
\textbf{File}& \multicolumn{3}{|c|}{\textbf{Non-valid inputs}}
& \multicolumn{3}{|c|}{\textbf{Valid inputs}}\\\hline
    \textbf{Bytes} &
\textbf{Grammar}
 & 
\textbf{Grammar+cuts}
 & 
\textbf{\% }
 & 
\textbf{Grammar}
 & 
\textbf{Grammar+cuts}
&
\textbf{\% }
 \\
\hline 

276 & 9,545 & 7,339 & 23.1\% & 2,096 & 2,103 & -0.3\%\\\hline
872 &12,682 & 10,141 & 20.0\% & 6,968 & 6,989 & -0.3\%\\\hline
1k & 34,149 & 29,092 & 14.8\% & 8,276 & 8,301 & -0.3\%\\\hline
2k & 27,075 & 23,180 & 14.4\% & 12,718 & 12,745 & -0.2\%\\\hline
4k &46,611 & 42,614 & 8.6\% & 22,993 & 23,042 & -0.2\%\\\hline
2k &112,805 & 108,687 & 3.7\% & 25,439 & 25,452 & -0.1\%\\\hline
6k &61,323 & 57,369 & 6.4\% & 39,700 & 39,802 & -0.3\%\\\hline
10k &72,036 & 68,265 & 5.2\% & 62,971 & 63,082 & -0.2\%\\\hline
14k &117,316 & 113,892 & 2.9\% & 90,022 & 90,174 & -0.2\%\\\hline\hline

\multicolumn{7}{|c|}{\textbf{C89 (Git files)}}\\\hline\hline
\textbf{File}& \multicolumn{3}{|c|}{\textbf{Non-valid inputs}}
& \multicolumn{3}{|c|}{\textbf{Valid inputs}}\\\hline
    \textbf{lines} &
\textbf{Grammar}
 & 
\textbf{Grammar+cuts}
 & 
\textbf{\% }
 & 
\textbf{Grammar}
 & 
\textbf{Grammar+cuts}
&
\textbf{\% }
 \\
\hline 

trace (329) &3,770,039 & 3,692,000 & 2.1\% & 1,222,637 & 1,224,472 & -0.2\%\\\hline
tree (131) &1,481,679 & 1,317,197 & 11.1\% & 240,599 & 240,995 & -0.2\%\\\hline
version (35)& 826,840 & 798,336 & 3.4\% & 165,720 & 165,978 & -0.2\%\\\hline
walker (144) &4,177,590 & 4,051,591 & 3.0\% & 787,573 & 789,148 & -0.2\%\\\hline

\end{tabular}

%\noindent{
%\resizebox{.85\textwidth}{!}{
%\tiny
%\begin{tabular}{|p{0.065\textwidth}|p{0.04\textwidth}|p{0.04\textwidth}|p{0.03\textwidth}||p{0.04\textwidth}|p{0.04\textwidth}|p{0.03\textwidth}|}
%\hline
%    \multicolumn{7}{|c|}{\textbf{C89}}\\\hline
%    \textbf{File (lines)} &
%\textbf{G}
% & 
%\textbf{G+cut}
% & 
%\textbf{\% dec}
% & 
%\textbf{G}
% & 
%\textbf{G+cut}
%&
%\textbf{\% inc}
% \\
%\hline 
% trace 329 & 2.877.870 & 2.632.715& 8.5\%&1.222.637&1.224.472&0.1\%\\\hline
% tree 131 & 260.39 & 196.758 & 13.0\% &240.599&240.995&0.2\% \\\hline
%version 35 & 128.219 & 125.504 & 2.1 \% &165.720&165.978& 0.2\% \\\hline
%walker 144 & 950.911& 887.933& 8.1 \% &787.573&789.148&0.2\%\\\hline\hline
%    \multicolumn{7}{|c|}{\textbf{Pallene}}\\\hline
%    \textbf{} &
%\textbf{G}
% & 
%\textbf{G+cut}
% & 
%\textbf{\% dec}
% & 
%\textbf{G}
% & 
%\textbf{G+cut}
%&
%\textbf{\% inc}
% \\\hline
% 74 inv. inputs & 124.130 & 102.056 & 17.7\% &&& \\\hline
% 87  files &&&& 270.783 & 272.611 & 0.7\%\\\hline
%\end{tabular}
%}
%}
\end{table*}

\noindent\textbf{JSON}. 
The main non-terminal of the JSON grammar tries all the possible shapes for a value: 

\begin{lstlisting}

value <- str / num / obj / arr / "true" / ... 
\end{lstlisting}

Strings start with (simple) quotes, numbers with digits, objects with ``\lstinline|{|'', arrays with ``\code{[}'', etc. Those tokens are unique and, once consumed,   a failure in the remaining expression should make the whole expression, including \code{value}, to fail.
This allows us to introduce some  cuts, for instance, 
\begin{lstlisting}

obj <- '{' pair (',' pair)* try('}')  / '{' try('}')
\end{lstlisting}
Hence, an open curly bracket without the corresponding closing one  causes a global failure.  
It is also possible to replace the second occurrence of \code{pair} with \code{try(pair)}: once \code{','} is consumed, the parser cannot fail in recognizing a pair.

We took some files  from repositories benchmarking JSON parsers
and produced  random invalid files
by  word mutations \cite{DBLP:conf/sle/RaselimoTF19}. 
More precisely, we randomly deleted some  (maximum  10) symbols
\code{']'},\lstinline|'}'|, \code{':'} and \code{','} 
in the corresponding file. 
For each of the 9 (correct) files, we generated 10 invalid files.

The results are in Table~\ref{table-results}.
The columns are:   the size of the file; the sum
of the steps in the 10 cases for  the grammar  and the annotated grammar; the percentage of reduced steps ($1.0 - Gcut/G$);
the number of steps in the grammar and the annotated grammar 
when processing the original (valid) file; and the percentage of the steps increased ($1.0 - Gcut/G$). 

Due to the format of JSON files,  
the same production rule is applied several times. Hence, if an error 
 occurs towards the end of the file, the $\try{\cdot}$ constructor will be invoked several times and, only in the end, it will save some backtracks. 
On a MacBook Pro (4 cores 2,3 GHz, 8GB of RAM) 
running Maude 3.0, processing in batch the 
9 valid files and the 90 invalid files
with the two grammars 
 takes 4.76s (average of 10 runs). \\
% In particular, we consider the cases for $n=10,50,100,500, 1000$ and three scenarios: delete one symbol, insert one character, and substitute one of the characters. For each instance, we generate 20 (random) different incorrect inputs.  Table \ref{tab:anbncn} reports the average of the differences in the number of steps obtained with the original grammar and the annotated one. 

\begin{table}
\caption{Experiments for Pallene \label{tab:pallene}}
\begin{tabular}{| r |  r |  r |  r |}
\hline
%\multicolumn{4}{|c|}{\textbf{Non-valid inputs}}
%\\\hline
\textbf{\# Files}
  &
\textbf{Grammar}
 & 
\textbf{Gram+cuts}
 & 
\textbf{\% }
 \\
\hline 
74 non-valid files & 124,130 &	102,056	& 17.8\% \\\hline
%\multicolumn{4}{|c|}{\textbf{Valid inputs}}\\\hline
87 valid files & 270,783 & 272,611 & -0.7\%
\\\hline
\end{tabular}
\end{table}

\noindent\textbf{Pallene.} Consider now the grammar for Pallene  \cite{DBLP:conf/sblp/GualandiI18}, a  statically-typed programming language derived from Lua. It is interesting to see the granularity we can achieve with the cut mechanism proposed here. For instance, in the rules
\begin{lstlisting}

import <- 'local' try(name) try('=') 'import' ...
foreign <-'local' try(name) try('=') 'foreign' ...
\end{lstlisting}
 \code{'import'} does not raise an error when failing (and other alternatives are still available)  while  
a failure in  \code{name}
is global. 
Compare this behavior with the annotation $e \Pcut e'$: after $\Pcut$, it is not possible to control which sub-expressions of $e'$ can be considered as normal failures (where alternatives cannot be discarded) or  unrecoverable errors.

We took 74 invalid small programs ($< 60$ bytes per file) 
  that were proposed by the designers of Pallene as benchmarks  
  to test some specific features of the language.  We also processed 87 correct files from the same repository.

  The results are in Table \ref{tab:pallene}. Maude took 2.67s  to process the 322 tests (valid and invalid files with the two grammars). \\

\noindent\textbf{C89.} 
Finally, we considered the grammar of C89.
Here some examples of annotations: 
\begin{lstlisting}

enumerator <- ID '=' try(const-exp)  /  ID
case <- "case" try(const-exp ':' stat) 
\end{lstlisting}
When enumerating the elements of a \code{enum} expression, if
after a name there is an an assignment operator ('='),
then an expression must come after it. The second rule fails when the token \code{case} is consumed and any of the remaining expressions fail.

We took fragments of the files \code{trace.c}, \code{tree-diff.c}, \code{version.c}, \code{walker.c}
in the  implementation of Git (\url{https://github.com/git/git}). 
For each case, we generated 10 different files by randomly deleting curly brackets, parentheses and semicolons. 
The results are in Table \ref{table-results}. In this case, 
Maude took 91.3s (avg. of 10 runs)
to process the whole collection of files.

\section{Related Work}
\label{sec:related}
% !TEX root = main.tex

Hutton introduced the use of a new parser
combinator to make a distinction between an error
and a failure during the parsing process~\cite{hutton_1992}.

The idea of defining multiple kinds of failures for PEGs
was proposed in~\cite{DBLP:conf/sblp/MaidlMI13}. In this work,   the authors
introduced labeled failures as a mechanism to improve syntactic
error reporting in PEGs
by associating a different
error message for each label. This first formalization
provided a kind of choice operator that could handle 
set of labeled failures. These labeled choices can
also be used as prediction mechanism, where a label
indicates which alternative of a choice should be
tried~\cite{DBLP:journals/scp/MaidlMMI16}, allowing PEGs to
simulate the LL(*) algorithm used previously by
ANTLR~\cite{DBLP:conf/pldi/ParrF11}. 

However, this first labeled failures formalization
makes harder to recover from
an error in its own context, since  when 
a label is handled 
in a choice, the information about the error context is lost.
Because of this, attempts to deal with error recovery in PEGs
favored a semantics where the $\throw$ operator itself handles
the error~\cite{DBLP:conf/sac/MedeirosM18,DBLP:journals/scp/MedeirosJM20}.
To recover from an error, the $\throw$ operator tries
to match a recovery expression (a regular parsing expression).

The present work uses the idea of labeled failures to make
a distinction between local errors and global errors,
where both kind of errors may avoid unnecessary backtracking
when compared to a regular failure. The \emph{catch} and 
\emph{try} mechanisms can be simulated by using
the error recovery semantics of the $\throw$ operator. In the case
of $\catch{\cdot}$, the recovery expression of $\throw$ should be
an expression that always fail (e.g., $!a a$), thus resulting in
a regular failure. In case of $\try{\cdot}$, we simply do not
provide a recovery expression for $\throw$. 
Although $\throw$ can simulate \code{catch(.)} and \code{try(.)},
the use of these specific constructs helped us to show how to control
local and global backtracks in PEGs, and to properly formalize
the cut mechanism proposed in~\cite{DBLP:conf/paste/MizushimaMY10}.

Rewriting logic has been extensively used for specifying and verifying different systems \cite{meseguer-twenty-2012}. In the context of programming languages, it is worth mentioning the K Framework \cite{DBLP:journals/entcs/RosuS14}. Symbolic techniques in rewriting are currently the focus of intensive research (see a survey in \cite{DBLP:journals/jlap/DuranEEMMRT20}). In fact, one of the inspirations of this work came from an example 
of CFGs reported in \cite{DBLP:journals/jlap/AlpuenteCEM20} (and also used in \cite{DBLP:journals/jlap/DuranEEMMRT20}). Roughly, 
\emph{narrowing} (rewriting with logical variables as in logic programming) was used  to 
explore symbolically 
the state of programs and apply  partial evaluation \cite{DBLP:conf/dagstuhl/1996pe} (a program transformation) to improve the efficiency of  Maude's specifications.
%We note that narrowing in Maude does not  support conditional rules and this is one of the reason to prefer $\Rwpeg$
%over $\rwpeg$. 

%Since our semantics does not include conditional  rules (narrowing does not support such kind of rules), we can further refine our specification 
%with the techniques in \cite{DBLP:journals/jlap/AlpuenteCEM20}. 

\section{Concluding Remarks}
\label{sec:conc}
% !TEX root = main.tex
We have proposed a  framework based on rewriting logic to formally study the behavior of PEGs. 
Relying on this machinery, we proposed a general view of cuts where local and global failures can be treated uniformly and with a clear semantics. Such operators have a pleasant similarity to the usual try/catch
expressions in programming languages, thus making it easier to understand and predict their behavior when designing a grammar. 
Our specification is not only a formal theory 
of the derivability relation in PEGs, but it is also an executable system. Based on it, we have tested some optimizations on grammars.

%Since our semantics does not include conditional  rules (narrowing does not support such kind of rules), we can further refine our specification 
%with the techniques in \cite{DBLP:journals/jlap/AlpuenteCEM20}. 

In \S \ref{sec:symbolic}, we report on a preliminary attempt of using the 
rewrite logic theory developed here, together with symbolic techniques, to propose a 
   derivative parser that can be used as basis for other analyses. In particular, 
 we show that it is possible to generate all the strings, up to a given length, from a grammar.   This should provide more tools for PEG's user to  verify whether a grammar 
correctly models a given language of interest. We still need to refine
this work and to further investigate how it compares to other works
that explored the use of derivatives in
PEGs~\cite{DBLP:conf/lata/Moss20,DBLP:journals/corr/abs-1801-10490}
and in CFGs~\cite{DBLP:conf/icfp/MightDS11}.

We also foresee  to use the symbolic traces to generate positive and negative cases for a grammar \cite{DBLP:conf/sle/RaselimoTF19}. It is also worth exploring if 
the symbolic semantics in 
\S \ref{sec:symbolic} can be useful for proving language equivalence on restricted fragments of PEGs (such problem is undecidable in general \cite{DBLP:conf/popl/Ford04}). 

The manual insertion of catch and try expressions   in a grammar may be
a tedious and error prone task. Moreover, it may hinder the grammar clarity. 
Fortunately, a good amount of theses annotations can be done automatically.
We are currently exploring different automatic labeling algorithms inspired by those
in \cite{DBLP:conf/paste/MizushimaMY10} and 
\cite{DBLP:journals/scp/MedeirosJM20}. Our framework can be handy in  proving  the correctness
of the resulting  algorithm (i.e., the language is preserved after the annotation). 
This approach based on the automatic annotations should
preserve the grammar clarity while still avoiding some amount of unnecessary
backtracking.

Finally, one of the main reasons for introducing  cuts in PEGs is to save memory in packrat parsers  \cite{DBLP:conf/paste/MizushimaMY10}. 
The cuts proposed here generalize this idea and avoid further backtracks. 
Since our specification does not build a memoization table, 
measuring the  usage memory
when running Maude is pointless. 
Hence, it may be worth implementing global cuts in a packrat parser in order to evaluate the impact of the optimization in terms of memory consumption.

\appendix
% !TEX root = main.tex
\section{Proofs of Adequacy Theorems}
\label{app:proofThAdeq1}
\noindent\textbf{ Theorem \ref{th:adeq} ($\rwpegB$ and $\Lp$ coincide)}
\begin{proof}
Recall that  strings (\code{x}) and \code{fail} are normal forms for 
$\rwpeg$. Also, if $\PEGstateD  \Lp S $, $S$ is either $\Lfail$ or some string $x$. We shall show that 
$\PEGstateD  \Lp S $ iff  \code{G[e]x}$\rwpegB$ \code{S} for any $S$. This 
discharges both (1) and (2).

\noindent(\textbf{$\Rightarrow$}). 
We proceed by induction on the height of the derivation of  $\PEGstateD  \Lp S $.
Assume a derivation of height 1.   Hence, either $S=x$ and 
 the rules {\bf empty} or {\bf term.1}
were used; or $S=\Lfail$ and {\bf term.2} or {\bf term.3}
were used. In the case {\bf empty},   $x=y$, $e=\epsilon$ and clearly 
\code{G[emp]x} $\rwpeg$ \code{x} by using the rule \code{empty}. In the case {\bf term.1}, $x=ay$  and $e=a$. The corresponding term \code{G[a] ay}  matches the left-hand side of  \code{Terminal12}, \code{match(a,a)} reduces to \texttt{true} and the whole term reduces to \code{y} as expected. The cases when $S = \Lfail$ are similar. 

Assume now that $\PEGstateD  \Lp S $ is proved with a derivation of height $>1$. He have 9 cases. 
Let us consider some of them since the others follow similarly. 
 If {\bf var} was used, $e=A$ and there is a (shorter) derivation of $\PEGstate{G}{P(A)}{x} \Lp S$. By induction, we know that \code{G [P(A)]x } $\rwpegB$ \code{S}. 
 Seen $G$ as a set of rules,  $G = G' \cup \{A\leftarrow P(A)\}$. If the current expression is \code{A} (a non-terminal), the only matching rule on the corresponding  term \code{t = (G', A <- P(A)) [A] x} is   \code{NTerminal} that unifies 
$G=G'$, $N=A$ and $e = P(A)$. Hence,
$t\rwpeg $\code{G[P(A)] x} which, in turns reduces 
to \code{S}. 
Assume now that the derivation ends with {\bf ord.2}. 
By induction we know that \code{G[e1]x}$\rwpegB$\code{fail} and also, \code{G[e2]x}$\rwpegB$\code{S}. 
There are two rules that match \code{e1 / e2} (\code{Choice1} and \code{Choice2}). However, 
the side condition in \code{Choice1} does not hold. Using \code{Choice2}, we know that 
\code{G[e1 / e2]x}$\rwpeg$\code{G[e2]x}
that, in turns, reduces to \code{S}. The other cases follow similarly.

\noindent(\textbf{$\Leftarrow$}). 
Assume that $t=$\code{G[e]x}$\rwpegB$ \code{S}.
This means that there exists $n>0$ (since $\PEGstateD$ is not a normal form)  and a derivation of the form  
$t = t_0 \rwpeg t_1 \rwpeg \cdots \rwpeg t_n\not\rwpeg$. 
Due to the side conditions in the rules, each step on this derivation may include the application of several rules. Hence, we proceed by induction on $m$ where $m$ is the total number of rules applied (including side conditions) in the above derivation. 
If $m=1$ then $n=1$ and either \code{empty}, \code{Terminal12} or \code{Terminal3} were used. The needed derivation in $\Lp$ results from applying the corresponding rule. 
If $m>1$ we have several cases depending on the rule applied on $t_0$.   Assume that the derivation starts with \code{Seq1}. This means that $e = e1 . e2$. Moreover, 
$t_1 = G[e2]y$ and, by the side condition of the rule, 
\code{G[e1] x => y}. By induction, $\PEGstate{G}{e_2}{y} \Lp S$. Since  \code{y} is a normal form,  \code{G[e1]x}$\rwpegB$ \code{y} with a smaller number of steps and, by induction,  $\PEGstate{G}{e_1}{x}  \Lp y $. Now we use {\bf seq.1} to conclude this case. 
The other cases follow similarly. 
\end{proof}

%\section{Lemma \ref{lemma:eq} (equivalences)}\label{sec:lemmaequiv}
%\begin{proof}
%In each case, we prove both inclusions $\subseteq$ 
%and $\supseteq$.  The key observation is that successful states (a string $x$) are preserved by both $[\cdot]_e$ and $[\cdot]_c$. 
%\begin{enumerate}
% \item If $[[e]_e]_c$ reduces $xy$ into $y$, it must be 
%the case  that both $[e]_e$ and $e$ also  reduces $xy$ into $y$. 
% Hence, $e$ does not fail on $xy$ and $[e]_c$ also reduces $xy$ into $y$. 
%  The other direction follows from  a similar argument. 
%% \item Both direction are proved by observing that 
%% $[e]_c$ succeeds iff $e$ succeeds.
% \item If $[e]_c [e']_c$ succeeds, then both $e$ and $e'$ succeed. Hence, $e.e'$ succeeds and also $[e.e']_c$. The other direction is similar. 
% \item Assume that $\PEGstate{G}{![e]_c}{x} \RwpegB x$. Then, $[e]_c$ must fail on $x$. The only failing mode for $[e]_c$ is $\Lfail$. This means that $e$ fails (either $\Lfail$ or $\Lerror$) on $x$. In both cases, $\PEGstate{G}{!e}{x} \RwpegB x$ and then, 
% $[!e]_c$ can handle $x$. For the other direction, if $[!e]_c$ handles $x$, then $!e$ handles $x$ and either $e$ fails with $\Lfail$ or $\Lerror$. In both cases, $[e]_c$ fails with $\Lfail$ and $![e]_c$ handles $x$. 
%\end{enumerate}
%
%\end{proof}

\section{Symbolic Executions and Derivatives}
\label{sec:symbolic}
% !TEX root = main.tex
One useful technique to predict the behavior of a grammar is to automatically generate strings from it and check whether the results match the intuitive behavior. There are  recently works   implementing derivative  parsers \cite{DBLP:journals/jacm/Brzozowski64} for PEGs to accomplish this task \cite{DBLP:journals/corr/abs-1801-10490,DBLP:conf/lata/Moss20}. One of the main difficulties  is precisely the backtrack mechanism in PEGs.  Hence, the algorithms need to keep track of the different branches \cite{DBLP:conf/lata/Moss20}
and compute possible over approximations \cite{DBLP:journals/corr/abs-1801-10490} of
different notions as testing whether an expression recognizes the empty string (which is undecidable in general \cite{DBLP:conf/popl/Ford04}). 
This section  shows that our formal specification can be  adapted to implement a derivative tool. By 
 using constraints, we give a symbolic and compact representation of the possible outputs (of a fixed length) 
 derivable from a grammar. \\
 
 \noindent\textbf{Constrained Strings}. 
 The first step is to replace the input strings with  sequences of \emph{constraints} $c_1 . c_2 ...$.
   Intuitively, $c_i$  represents the set of characters allowed in the $i$-th position of the string. 
   To formalize this idea, we rely on the concept of constraint systems, commonly used in constraint logic programming \cite{DBLP:conf/popl/SaraswatR90}.
   A constraint system provides a signature from which  constraints can be constructed as well as an entailment relation $\entails$ specifying interdependencies between these constraints. 
A  constraint represents a piece of \emph{partial information} and $c \entails d$ 
means that information $d$ can be deduced from $c$. In the following definition, $t$ is a terminal symbol (\code{TSymbol}),  
   $a$ is a character (\code{TChar}) and  $t_c$ represents a character class (\code{TExp}). Given a terminal symbol $t$, we shall use $dom(t)$ to denote the set of characters allowed by $t$ (e.g., $dom(``x")=\{x\}$, $dom(\mbox{[0-9]}) = \{0,1,...,9\}$, etc.).

 \begin{definition}[Constraint System]
Constraints, usually ranged over $c_1,c_2, etc.$ are built from:
\[
 \qquad \qquad c  ::= \true \mid \false \mid t \mid \negC t \mid c \wedge c
\]

The entailment relation  $\entails$
 is the least relation closed by the rules of intuitionistic logic and the following axioms: 
\[
\begin{array}{lll}
t \wedge t' \entails \false& \mbox{whenever } dom(t) \cap dom(t') = \emptyset \\ 
t \wedge \negC t' \entails \false & \mbox{whenever } dom(t) \cap \overline{dom(t')} = \emptyset \\ 
 t \vdash t'& \mbox{whenever } dom(t) \subseteq dom(t')
\end{array}
\]
\emph{Constrained string} are  sequences of constraints and $c^n$ denotes the strings containing $n$ copies of $c$. 
 \end{definition}

Let us give some intuitions. The entailment $3 \entails [0-9]$ holds since $3$ is stronger (i.e., it constraints more) than $\mbox{[0-9]}$. When more constraints are added, the set of characters allowed decreases. 
Hence, conjunction corresponds to intersection of  domains ($dom(c \wedge c') = dom(c) \cap dom(c')$). 
$\false$ is the strongest constraint (since $\false \entails c$ for any $c$) and it represents an  inconsistent state (with empty domain). $\true$ is the weakest (i.e., $c\entails \true$ for any $c$) constraint and it is equivalent to $[.]$. 
$\negC t$ can be interpreted as the complement of the domain of $t$. Finally, note that the constraint $``x" \wedge [0-9]$ (resp. $\mbox{``3''} \wedge \negC \mbox{[0-9]}$) is inconsistent in virtue of the first (resp. second) axiom of $\entails$. 

The above definition gives rise to the expected signature: 
\begin{lstlisting}

 --- Atomic constraints and constraints
 sort AConstraint Constraint .
 subsort TSymbol < AConstraint < Constraint .
 op ~_ : TSymbol -> AConstraint .
 ops ff tt : -> Constraint . --- True and False
 op _/\_ : Constraint Constraint  -> Constraint . 

\end{lstlisting}

Our derivative parser will refine, monotonically, 
cons-trained strings. This means  that, in each step, more information/constraints will be added.
In order to keep the output as short/readable as possible, we define some simplifications on constraints. For example,  if $d \entails c$, then $c\wedge d$ can be simplified to $d$ (since $d$ contains more   information than $c$): 
\begin{lstlisting}

eq c /\ ff = ff . --- ff entails any c
eq c /\ tt = c .  --- c entails tt
eq c /\ c = c .   --- idempotency
eq a /\ a' = if a == a' then a else ff fi . 
eq a /\ ~ a' = if a == a' then ff else a fi .
\end{lstlisting}

For instance: $``a" \wedge [a-z]$ reduces to $``a"$;  $``a" \wedge ``b"$ reduces to $\false$; $``a" \wedge \negC ``b"$ reduces to $a$; etc. 
More generally,  
\[
\begin{array}{lll}
c \wedge c' = c'  \mbox{\ \ \ whenever } dom(c') \subseteq dom(c)\\
c \wedge c' = \false  \mbox{\ \ \ whenever } dom(c) \cap  dom(c') = \emptyset
\end{array}
\]

Now we define constrained strings: 
\begin{lstlisting}

 sort CStr .  subsort  Constraint < CStr .
 op _._ : Constraint CStr -> CStr [ctor right id: nil] .
 eq ff . c . x = ff .  --- Simplifying inconsistencies 
 eq c . ff . x = ff .   
\end{lstlisting}

Note that, e.g.,  the constrained string $\mbox{[0-9]}.\false.``a"...$ collapses into $\false$. \\

\noindent\textbf{Derivative Rules.}
The derivative parsing is  obtained by adjusting the specification in \S \ref{sec:rw}. First, states take the form \lstinline|G[e]  x :: y| meaning that the constrained string \code{y} was already processed and \code{x} is still being processed. 
Moreover, the final states take the form \code{ok(x,y)} (\code{y} was successfully consumed and the non-consumed input is \code{x}) and \code{fail(x,y)} (\code{x} could not be processed and failed). 
The  main change occurs  in the terminal rules:
\begin{lstlisting}

rl [Terminal] : G[t] (c.x) :: y => ok(x , ins( c /\ t , y)) .
rl [Terminal] : G[t] (c.x) :: y => fail( (c /\ ~ t).x , y) .
\end{lstlisting}

Note that, unlike the theory in in \S \ref{sec:rw}, here we have some non-determinism. Under input $c.x$ (a constrained string) we have two possibilities: either the parser succeeds consuming $c$ and adding to it the fact  that the string must start with a character matching $t$; or, it fails and the string 
must start with a character in $\overline{dom(t)}$. 

The other rules must be adjusted accordingly
to carry the history of constraints accumulated so far.   For instance, the rule for failures in choices becomes:
\begin{lstlisting}

 rl [Choice] : CHOICE( fail(x, y) , G[e'] x':: y') => 
    G[e'] conj(x', concat(y, x)) :: y' .

\end{lstlisting}
This means that, if the choice failed, the second alternative $e'$ must be evaluated on a constrained string with the additional information accumulated during the failure of the first alternative $e$. This is the purpose of the function $conj$ that simply applies 
point-wise  conjunction on the elements of the strings. 

\begin{example}[Symbolic outputs]
For each expression,
we show the resulting final states of the form 
\code{ok(x,y)}. For readability, instead of \code{ok(x,y)} we write $y :: x$ ($y$ was consumed and $x$ was returned). Let's start with some simple cases:

\noindent\textbf{-} $[0-9]~  a$. Only one solution: \code{ [0-9].a :: tt.}. This means that any valid string must start with $x\in dom([0-9])$, continue with $a$ and then, any symbol is valid (for all $x$, $x\in dom(\true)$). 

\noindent\textbf{-} $!a\ b~ (c\slash d)$. 
Maude returns two solutions:
\begin{lstlisting}

(~ a /\ b).c :: tt 
(~ a /\ b). (~ c /\ d) ::tt
\end{lstlisting}
  which further simplifies to  \code{b.c::tt} and \code{b.d::tt}. 
  
\noindent\textbf{-} $!a\ a$: \code{no solution}

Consider now the rule: 
\begin{lstlisting}

NUMBER <-  [0-9]+ . ("." . ( ! "." . [0-9])+)? . 
\end{lstlisting}

For strings of at most 3 elements, we have 7 solutions:
\begin{lstlisting}

[0-9]::(~ "." /\ ~ [0-9]).tt --- ex. "3ax"
[0-9].[0-9]::(~ "." /\ ~ [0-9]) --- ex. "23x"
[0-9].[0-9].[0-9] --- ex "123"
[0-9]::("." /\ ~ [0-9]) ."." --- ex "1.."
[0-9]::("." /\ ~ [0-9]).(~ "." /\ ~ [0-9])--- ex."1.a"
[0-9].[0-9]::("." /\ ~ [0-9]) --- ex. "12."
[0-9].("." /\ ~ [0-9]) .([0-9] /\ ~ ".") --- ex. "1.2"
\end{lstlisting}
After ``::'' we have the part of the string not consumed. The first output reads: the first character must be a number (and it is comsumed); then, the second character cannot be a digit, nor ``.''; and the last element can be any symbol. Then, e.g., the string ``3ax'' is accepted retuning the suffix ``ax''. 

The PEG for $a^nb^nc^n$ in the end of \S \ref{sec:rw} returns, as unique solution, the expected strings. As already noticed in \cite{DBLP:journals/corr/abs-1801-10490}, the grammar for the same language proposed in \cite{DBLP:conf/popl/Ford04} is incorrect. For a length of 6, our tool finds the following  solutions:
\begin{lstlisting}

a a b b c c 
a a a a a a 
a a a a b c 
\end{lstlisting}
\end{example}
The above symbolic strings are generated by using the search facilities in Maude:
\begin{lstlisting}

search [n] G[e] tt^n =>* ok(x',y') such that c .
\end{lstlisting}

 meaning ``compute the first $n$ states of the form \code{ok(x',y')} that satisfy the condition \code{c}
and  can be obtained by constraining   the string $\true^n$''. 
Note that rewriting is not enough since the theory is not longer deterministic and different paths must be considered in the \code{Terminal} rules. 
The \lstinline|search| command  implements a breadth-first search procedure and, therefore, no solution is lost. 
Needless to say that the search space may grow  very quickly.
Hence, either the condition $c$ is used to filter some of the solutions (e.g., 
compute  the symbolic strings 
that start with a digit)
or specific/localized rules of the grammar are analyzed independently. 
%
%  For instance, the grammar for JSON generates 50 solutions for sequences of 3 elements, including strings like``[ ]'', ``[$\setminus$t]'', ``[]$\setminus$n'', etc. 
%Using the condition \code{c}, we can 
%filter those that must start, e.g., with
%a digit. In general, either conditions are imposed 

We shall use $\SLp$ to denote the induced rewrite relation using the rules above. For a constrained string $s= c_1.c_2.\cdots$ and a string $x=a_1.a_2...$, we shall write $x \preceq s$ iff  for each $i$, $a_i \in dom(c_i)$ (i.e., $a_i$ is a legal character for $c_i$).

The next result shows that all the possible strings that can be generated from an expression $e$ are covered by the symbolic output and, moreover, all instances of a symbolic output are indeed valid outputs for $e$. 

\begin{theorem}[Adequacy]\label{th:sym-adq}
For all $G$, $e$, $x$ and $y$:

\noindent \textbf{Soundness:}  If $\PEGstate{G}{e}{xy} \Lp y$
 then, there exists $s_1$ and $s_2$ s.t. $\PEGstate{G}{e}{\true^{|xy|}} \SLpB \Lok(s_2,s_1)$ and $xy \preceq s_1::s_2$. 

\noindent \textbf{Completeness:} If $\PEGstate{G}{e}{\true}^n \SLpB \Lok(s_2, s_1)$ then, for all $xy \preceq s_1::s_2$, $\PEGstate{G}{e}{xy}\Lp y$. 
\end{theorem}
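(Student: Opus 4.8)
The plan is to reduce both clauses to a pair of simulation lemmas relating the symbolic relation $\SLp$ to the concrete relation $\Lp$ at the level of parsing states, and then to specialise the constrained input to $\true^n$, using that $x \preceq \true^{|x|}$ holds for every string $x$. Before that I would dispatch a batch of routine facts about the constraint system: that $\entails$ is sound and complete for domain inclusion ($c \entails d$ iff $dom(c) \subseteq dom(d)$), that $dom(c \wedge c') = dom(c) \cap dom(c')$, that the simplification equations on constraints and on constrained strings preserve domains (so $dom$ is a legitimate invariant), and that $conj$ and $concat$ compute pointwise conjunction and concatenation on domains. These yield a refinement preorder $s' \sqsubseteq s$ on equal-length constrained strings, defined by $dom(s'[i]) \subseteq dom(s[i])$ for all $i$, for which $w \preceq s$ amounts to $w$, viewed as a constrained string with each character a singleton constraint, refining $s$.

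The first lemma is the symbolic-to-concrete direction, proved by induction on the length of the symbolic derivation and case-splitting on the first rule applied: if $\PEGstate{G}{e}{s} \SLpB \Lok(s_2,s_1)$ then $s_1 :: s_2 \sqsubseteq s$ and every string $w$ with $w \preceq s_1 :: s_2$ satisfies $\PEGstate{G}{e}{w} \Lp w_2$, where $w = w_1 w_2$ with $|w_1| = |s_1|$; symmetrically, if $\PEGstate{G}{e}{s} \SLpB \fail(x,y)$ then $concat(y,x) \sqsubseteq s$ and every $w \preceq concat(y,x)$ satisfies $\PEGstate{G}{e}{w} \Lp \Lfail$. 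The delicate cases are the two \code{Terminal} rules---the produced constraints $c \wedge t$ and $c \wedge \negC t$ record exactly the success and failure conditions, and consistency of $c \wedge t$ on the success branch follows because $w \preceq c$ while the concrete terminal rule needs the consumed character to lie in $dom(t)$---and the rule that feeds the second alternative of a choice after a failure, where one must check that $conj(x',concat(y,x))$ is precisely the refinement of the input recording ``$e_1$ fails here'', so that a concrete $w$ refining it both makes $e_1$ fail (by the failure half of the induction hypothesis) and remains a legal input for $e_2$.

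The second lemma is the concrete-to-symbolic direction, proved by induction on the height of the derivation of $\PEGstate{G}{e}{w} \Lp y$ (resp.\ of $\PEGstate{G}{e}{w} \Lp \Lfail$): for every $s$ with $w \preceq s$ and $|s| = |w|$ there exists a symbolic derivation $\PEGstate{G}{e}{s} \SLpB \Lok(s_2,s_1)$ (resp.\ $\SLpB \fail(x,y)$) with $w \preceq s_1 :: s_2$ (resp.\ $w \preceq concat(y,x)$) and matching split lengths. Here the nondeterminism of the \code{Terminal} rules is resolved by following the concrete outcome, and every remaining constructor is mechanical once the constraint plumbing of the first lemma is available; in both inductions the well-formedness assumptions on $G$ bound the derivations, so the recursions are well founded. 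The theorem follows by taking $s = \true^n$ with $n = |xy|$: completeness is the first lemma, where the matching split lengths force the returned part to be exactly $y$, and soundness is the second.

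I expect the main obstacle to be the interaction of choices and predicates with accumulated failure constraints: arguing that the constraints recorded along a failing symbolic run are not merely necessary but also jointly sufficient for the concrete run to fail, i.e.\ that the symbolic parser never over-commits and prunes an input that would in fact succeed. Concretely one has to show that on a failure branch the constraint sequence equals the pointwise conjunction of all the $t$'s and $\negC t$'s encountered, with $!e$ flipping the polarity in the expected way, and that the $conj$/$concat$ bookkeeping in the choice, sequence and negation rules faithfully transports this invariant through the state simplifications. Stating this characterisation of the failure states cleanly, and checking it survives the constraint simplifications, is the crux; the rest is bookkeeping.
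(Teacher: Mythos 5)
Your plan matches the paper's own proof in all essentials: both strengthen the statement to cover failure outcomes (carrying the non-consumed suffix), prove completeness by induction on the length of the symbolic derivation and soundness by induction on the concrete derivation, and identify the terminal rules and the failure-to-second-alternative step of choice as the delicate cases. The only differences are presentational --- the paper routes through $\Rwpeg$ via Theorems \ref{th:adeq} and \ref{th:adq2} rather than working with $\Lp$ directly, and it handles the refined-input issue in the choice case with an informal ``the semantics only adds new constraints'' remark where you make the generalisation over arbitrary refined inputs (your preorder $\sqsubseteq$) an explicit part of the induction hypothesis, which is arguably the cleaner formulation.
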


\begin{proof}
We shall prove the correspondence between
$\SLp$ and $\Rwpeg$. By Theorems \ref{th:adeq} and \ref{th:adq2}
, the result extends to $\Lp$. 

We shall consider an 
alternative version of $\Rwpeg$ that, on failures, returns   the non-consumed input. Hence, 
$\PEGstate{G}{e}{xy} \Rwpeg \Lfail(y)$ means that $x$ (a possible empty string) was consumed and $y$ could not be recognized. This will simplify the arguments below.
Our proof  considers the failing cases, i.e., we shall prove the following: 
\begin{itemize}
 \item Soundness:  If $\PEGstate{G}{e}{xy} \RwpegB y$
 then, there exists $s_1$ and $s_2$ s.t. $\PEGstate{G}{e}{\true^{|xy|}} \SLp \Lok(s_2,s_1)$ and $xy \preceq s_1::s_2$. Moreover, if $\PEGstate{G}{e}{xy} \RwpegB \Lfail(y)$
 then $\PEGstate{G}{e}{\true^{|xy|}} \SLp \Lfail(s_2, s_1)$ and 
 $xy \preceq s_1::s_2$. 
 \item Completeness: If $\PEGstate{G}{e}{\true}^n \SLp \Lok(s_2,s_1)$ then, for all $xy \preceq s_1::s_2$, $\PEGstate{G}{e}{xy}\Lp y$. Moreover, if
 $\PEGstate{G}{e}{\true}^n \SLp \Lfail(s_2,s_1)$, 
  for all $xy \preceq s_1::s_2$, $\PEGstate{G}{e}{xy}\Lp \Lfail(y)$.
 
\end{itemize}

The main difference between  $\SLp$ and $\Rwpeg$ is on the terminal rules. Namely,  $\SLp$ considers two cases: either the string contains the needed terminal symbol $t$ or it does not. In the second case, the derivation fails and adds the constraint $\negC t$. 

\noindent\textbf{Soundness}. 
We proceed by induction on the length of the derivation of $\PEGstate{G}{e}{x} \RwpegB S$. In the base case, either $e=\epsilon$ or $e=a$. In the case of a terminal symbol,
we have two possible outcomes: 
$\PEGstate{G}{a}{x} \Rwpeg x'$ (and $x=ax'$)
or $\PEGstate{G}{a}{x} \Rwpeg \Lfail(x)$ (and $x=bx'$ for $b\neq a$ or $x=\epsilon$). 
In the successful case, $n = |x|\geq 1$
and $\PEGstate{G}{a}{\true ^{|x|}}$ has two possible outcomes: $\Lok(\true^{n-1}, a)$ and $\Lfail((\negC a) \true^{n-1}, nil)$. Note that $ax' \preceq a\true^{|n-1|}$. The case when $x=bx'$ fails is considered in the second symbolic output: $bx' \preceq (\neg a)\true^{|n-1|}$ (note that, if $b\neq a$, then $b\in dom(\neg a)$). 
For the inductive case, we have several subcases. Consider the case when the derivation starts with \code{Choice}. Hence, $e=e_1\slash e_2$. There are two possible outcomes for $e_1$ and, by induction, both are instances of one of the symbolic outputs. The case when $e_1$ succeeds is immediate. 
Consider the case where $\PEGstate{G}{e_1}{xyz} \RwpegB \Lfail(yz)$ and $\PEGstate{G}{e_2}{xyz} \RwpegB z$. By induction, $\PEGstate{G}{e_1}{\true^n} \SLp \Lfail( s_ys_z ,s_x)$
where $s_y$ and $s_z$ explain the failure of 
$e_1$ and $s_x$ is in agreement with the part of the string consumed.   We know  that $x\preceq s_x$ and $yz \preceq s_ys_z$. By induction, we also  know that 
$\PEGstate{G}{e_2}{\true^n} \SLp \Lok(\true^{|z|}, w_xw_y)$. However, the semantics executes $e_2$
on $\true^n \wedge s_xs_ys_z$ (and not on $\true^n$). Since the semantics only adds new constraints to the sequence of constraints, we can show that 
$\PEGstate{G}{e_2}{\true^n  \wedge s_xs_ys_z} \SLp
 \Lok(s_z, (w_x \wedge s_x)(w_y\wedge s_y)$. Since
 $x\preceq s_x$ and $x\preceq w_x$, then 
 $x\preceq s_x \wedge w_x$. Similarly for $y$ and $z$ and the result follows. The other cases are similar. 
 
\noindent\textbf{Completeness}. 
We proceed by induction on the length of the derivation of 
$\PEGstate{G}{e}{\true}^n \SLpB S$.
For the base case, 
consider a one-step derivation and 
assume that $S=\Lok(s_y, c)$. Hence, $e=t$, $c=t$ and $s_y=\true^{n-1}$. If $xy\preceq t :: s_y$ then $x\in dom(t)$ and clearly $\PEGstate{G}{t}{xy}\Rwpeg y$. If $S=\Lfail((\negC t)s_y, nil)$ then
$e=t$ and $s_y=\true^{n-1}$. If $xy \preceq (\negC t)s_y$, $x\in dom(\negC t)$ and then, 
$x \notin dom(t)$. This means that $x$ does not match $t$ and $\PEGstate{G}{t}{xy}\Rwpeg \Lfail(xy)$. 

For  the inductive case, we have several subcases. Consider a derivation that starts with \code{Sequence}: 

$\PEGstate{G}{e_1.e_2}{\true^n} \SLpB \Lok(s_z, s_xs_y)$. 
By the definition of $\SLp$, we know that 
 $\PEGstate{G}{e_1}{\true^n} \SLpB \Lok(s_ys_z, s_x)$. Moreover, 
 
$\PEGstate{G}{e_2}{s_ys_z} \SLpB \Lok(s_z, s_y)$. By induction, we deduce

$\PEGstate{G}{e_1}{xyz} \RwpegB yz$. 
From a similar observation about conjunction of sequences (as done in the proof of Soundness), we can also show that 
$\PEGstate{G}{e_2}{yz} \RwpegB z$. 
We then conclude 

$\PEGstate{G}{e_1.e_2}{xyz} \RwpegB z$. 
   The other cases are similar. 
\end{proof}

\newpage

%\balance

\bibliographystyle{ACM-Reference-Format}
\bibliography{ms}

\end{document}